\newtheorem{theorem}{{\bf \sc Theorem}}
\newtheorem{lemma}{{\bf \sc Lemma}}
\newtheorem{example}{{\bf \sc Example}}
\newtheorem{proposition}{{\bf \sc Proposition}}
\newtheorem{definition}{{\bf \sc Definition}}
\newcommand{\mc}[1]{\ensuremath{\mathcal{#1}}}
\newcommand{\delrj}[1]{}
\newtheorem{assumption}{{\bf \sc Assumption}}
\begin{document}


\title{Incentivizing Resilience in Financial Networks\thanks{The authors thank participants at the \textit{Financial Risk and Network Theory Conference 2016}, Cambridge University, UK, and at the \textit{7th Annual Financial Market Liquidity Conference (2016)}, Budapest, Hungary, where this work was presented, and particularly P\'{e}ter Bir\'{o} for detailed feedback.}}

%
%
\author{Matt V. Leduc\thanks{IIASA, Schlossplatz 1, 2361 Laxenburg, Austria. Email: mattvleduc@gmail.com}  \and Stefan Thurner\thanks{(1) Section for Science of Complex Systems, Medical University of Vienna, Spitalgasse 23, 1090 Vienna, Austria;   (2) Santa Fe Institute, 1399 Hyde Park Road, Santa Fe, NM 87501, USA; (3) IIASA, Schlossplatz 1, 2361 Laxenburg, Austria; (4) Complexity Science Hub Vienna, Josefst\"adterstrasse 39. 1080 Vienna, Austria. Email: stefan.thurner@meduniwien.ac.at} %
}




\date{\small{International Institute for Applied Systems Analysis (IIASA) \\ Institute for Science of Complex Systems (Medical University of Vienna) \\ Santa Fe Institute \\    \  \\ May 2017\\ \small The published version of this article appeared in \textit{Journal of Economic Dynamics \& Control 82 (2017) 44-66} and is available at http://dx.doi.org/10.1016/j.jedc.2017.05.010}   }
%


%
%

\maketitle

\begin{abstract}
When banks extend loans to each other, they generate a negative externality in the form of systemic risk. They create a network of interbank exposures by which they expose other banks to potential insolvency cascades. In this paper, we show how a regulator can use information about the financial network to devise a transaction-specific tax based on a network centrality measure that captures systemic importance. Since different transactions have different impact on creating systemic risk, they are taxed differently. We call this tax a Systemic Risk Tax (SRT). We use an equilibrium concept inspired by the matching markets literature to show analytically that this SRT induces a unique equilibrium matching of lenders and borrowers that is systemic-risk efficient, i.e. it minimizes systemic risk given a certain transaction volume. On the other hand, we show that without this SRT multiple equilibrium matchings exist, which are generally inefficient. This allows the regulator to effectively  stimulate a `rewiring' of the equilibrium interbank network so as to make it more resilient to insolvency cascades, without sacrificing transaction volume. Moreover, we show that a standard financial transaction tax (e.g. a Tobin-like tax) has no impact on reshaping the equilibrium financial network because it taxes all transactions indiscriminately. A Tobin-like tax is indeed shown to have a limited effect on reducing systemic risk while it decreases transaction volume.

\noindent {\bf Keywords}: Systemic Risk, Interbank Networks, Insolvency Cascades, Network Formation, Matching Markets, Transaction-Specific Tax, Market Design.

\noindent {\bf JEL Codes}: C78, D47, D85, D62, D71, D53, G01, G18, G21, G32, G33, G38

\end{abstract}

\thispagestyle{empty}

\setcounter{page}{0} \newpage

\section{Introduction}


Systemic risk is a property of interconnected systems, by which the failure of  an initially small set of entities can lead to the failure of a significant part of the system. 
Mechanisms by which such failures can spread have been studied in financial and economic networks, where different institutions (e.g. banks, funds, insurance companies, etc.) are exposed to each other through a network of assets and liabilities (e.g. \cite{eisenberg2001systemic}, \cite{boss2004network}, \cite{gai2010contagion} or  \cite{amini2013resilience}). The insolvency of a particular institution can then precipitate other institutions into insolvency, thus generating an insolvency cascade threatening the whole system. It is now understood that systemic risk is a network property and thus different network topologies exhibit different levels of resilience to insolvency cascades. For example, \cite{acemoglu2013systemic}, \cite{ElliotGolubJackson2014} or \cite{glasserman2015likely} study how the level of interconnectedness, in conjunction with exogenous shocks, affects the resilience of financial or economic networks.

Since systemic risk is closely related to network structure, managing systemic risk can be understood as attempting to optimally shape the architecture of the financial network. While some work has studied the formation of financial networks (e.g. \cite{farboodi2014intermediation,zawadowski2013entangled,babus2016formation,anufriev2016model}), less work has been devoted to controlling the incentives that institutions (e.g. banks) may have to form a resilient network. 
Moreover, the main policies and regulations currently employed or under consideration do not emphasize network structure. In financial systems, one such policy consists of setting capital buffers or reserve requirements, especially for systemically important institution\footnote{The Basel III framework acknowledges systemically important financial institutions (SIFI) and argues for increasing their capital requirements. See for example \cite{georg2011basel}.}. 
 Such strategies largely fail to address the essence of the problem since they neglect the financial networks aspect altogether\cite{poledna2016basel}.
Taxes imposed on banks in the form of contributions to a rescue fund have also been implemented\footnote{For example, the International Monetary Fund has proposed such a tax, the `financial stability contribution' (FSC). This means a contribution made by financial institutions to reserves for eventual crises. Such taxes have also been proposed in many countries.}. Finally, financial transaction taxes (FTT) have also been proposed\footnote{Unlike a bank tax, a financial transactions tax (FTT) is a tax placed on certain types of financial transactions. Such FTTs are being considered in various countries. A goal of such taxes is to curtail financial market volatility (see, for example, \cite{summers1989financial} or \cite{tobin1978proposal}). Related empirical research generally remains ambiguous about Granger causality between FTTs and market volatility (e.g. \cite{mcculloch2011tobin}, \cite{matheson2012security}).}.
Such FTT's however fail to capture the idea that transactions between different counter-parties may have drastically different impacts on the resilience of the whole financial system, as this depends critically on their respective positions (and thus their centrality) in the network of assets and liabilities (\cite{poledna2016basel}). For instance, a borrowing bank that borrows from a systemically important lending bank may inherit its systemic importance. Indeed if the former becomes insolvent, then it may cause the insolvency of the latter, which can then initiate a large insolvency cascade. On the other hand, if the borrowing bank borrows from a lending bank with low systemic importance, then the insolvency of the former will have little impact of the system, even if it causes the insolvency of the latter (\cite{thurner2013debtrank}, \cite{poledna2014elimination}). More and more now, information about the topology of financial or interbank networks is available to regulators (e.g. Central Banks). This  allows them to measure the impact of different transactions on the resilience of the whole system. 

In this paper, we show how a regulator can use such information about the topology of the interbank system to design incentives that help create a more resilient system. Using this information, the regulator can design a \em transaction-specific \em tax that discriminates among the possible transactions between different lending and borrowing banks. By taxing transactions between different counter-parties differently, a regulator can effectively control the architecture of the financial system. A regulator can use this \textit{systemic risk tax (SRT)} to select an optimal equilibrium set of transactions that effectively `rewire' the interbank network so as to make it more resilient to insolvency cascades. This tax was previously introduced and simulated using an agent-based model in \cite{poledna2014elimination}. We will prove  analytically that this can be done \textit{without} reducing the total credit  (transaction) volume and thus without making the system less efficient. This leads to the notion of a \textit{systemic risk-efficient equilibrium}. The intuition behind this result is that under the SRT, any given transaction volume is exchanged under a different network configuration, which creates less systemic risk. Under this desired configuration, transactions remains untaxed, whereas under other configurations, transactions are taxed according to how much they increase systemic risk. The systemic risk tax in \cite{poledna2014elimination} is based on a notion of network centrality (e.g. \cite{battiston2012debtrank}) and can be easily implemented using information about the topology of the interbank networks and the banks' capitalization.

To illustrate those facts, we study a stylized economy in which institutions (e.g. banks) are hit by liquidity shocks from their clients and then trade that liquidity with other banks. We derive a strategic equilibrium in which borrowing banks prefer to borrow from banks offering the best terms (lowest borrowing rates) while lending banks, on the other hand, manage their risk by setting a risk premium according to the probability of default of the borrowing bank. This results in the creation of an interbank network of financial exposures (loans), which carry default risk. Our equilibrium concept is inspired by the literature on matching markets (e.g. \cite{GaleShapley1962}, \cite{roth1992two})\footnote{It also bears similarities to equilibrium concepts found the literature on network formation games (e.g. \cite{jackson1996strategic}).}. Borrowing banks are thus matched to lending banks in a way that reflects their preferences for one another. We make no assumptions on the topology of the interbank system, instead allowing it to emerge endogenously from the banks' rational decisions.

We start by showing that under a standard bilateral contracting mechanism, in which the lending rate is set according to the borrower's default risk, multiple network configurations can arise in equilibrium, and most of them may present `high' systemic risk. We then show that the proposed SRT allows a regulator to select a \em unique \em equilibrium network that is efficient in the sense that it presents the lowest systemic risk given a certain transaction volume. We also show that a standard financial transaction tax (FTT), such as a Tobin tax, does not eliminate the multiplicity of equilibria and reduces transaction volume, while having only a minimal effect on decreasing systemic risk. Indeed, a standard FTT has no effect on controlling the topology of the financial system since it taxes all transactions indiscriminately. In this sense, a SRT can be understood as a generalization of a standard FTT, where each particular transaction can be taxed differently, thus allowing a regulator to select distinct equilibrium configurations.

We provide some additional results, such as that a risk management strategy by which lending banks favor the least risky borrowers leads to a unique equilibrium network that is generally inefficient, i.e. it presents higher systemic risk than can be achieved with the SRT. 

The paper is structured as follows: In Section \ref{sec:IBmarket}, we present a simple model for the formation of an interbank network in which banks extend loans to each other, thereby generating dynamically a network of assets and liabilities. Our equilibrium concept is introduced. In Section \ref{sec:SR}, we examine how this network of assets and liabilities creates systemic risk and how the systemic risk of any set of transactions can be quantified. We then introduce the systemic risk tax and compare its performance to that of a Tobin-like tax.  Section \ref{sec:conclusion} concludes. For clarity, all proofs are presented in the appendix (Section \ref{sec:appendix}), along with a detailed analysis of the properties of equilibrium interbank networks under different regimes (no tax, Tobin-like tax, systemic risk tax). We also examine different risk management strategies on the part of lending banks and some model variations.

\section{The Formation of the Financial Network}
\label{sec:IBmarket}

Before studying the impact of a particular tax policy on a firm's resilience to insolvency cascades, we must discuss how an interbank network is formed from the banks' decisions and how this creates systemic risk as an externality. In this section, we study how a financial network is formed by the matching of borrowers to lenders. This matching emerges from the strategic decisions of banks, since borrowers have preferences for lenders based on the lending rates they offer. We introduce an equilibrium concept inspired by the literature on matching (e.g. \cite{GaleShapley1962}).

\subsection{Assets, Liabilities and an Exogenous Bankruptcy Mechanism}
We study a stylized economy in which there is a set $\mathcal{N}$ of $n$ banks and a time horizon $t \in [0,T]$,
where $T$ is a random terminal time. At any time $t$, each bank $i \in \mc{N}$ owns 
an (external) risky asset of value $Y^i_t$ and an (external) long-term liability $Z$ of constant value. We assume that $Y^i_0 > Z$ so that  every bank is initially solvent.

In addition to these (external) assets and liabilities, each bank has other assets and liabilities related to the conduct of normal banking operations at each discrete time point in the set $\{1,2,...,\lfloor T \rfloor \}$. These are: (i) the interbank assets $A^{i,IB}_t$ and liabilities $L^{i,IB}_t$ resulting from interbank lending and borrowing; (ii) assets $A^{i,HH}_t$ and liabilities $L^{i,HH}_t$ resulting from loans to and deposits from client households. It also has an additional risk-free asset (e.g. a bond) $X_t^i$ in which it invests some household deposits. We can then define the equity of a bank $i$ as its assets minus its liabilities: $E_t^i = Y^i_t + A^{i,IB}_t + A^{i,HH}_t+X_t - Z - L^{i,IB}_t - L^{i,HH}_t$. The balance sheet of a bank $i$ at time $t$ is shown in Table \ref{tab:balance_sheet}.

 \begin{table}
\begin{center}
\begin{tabular}{c||c|c}
  & Assets & Liabilities \\  
  \hline \hline & & \\
            Interbank & $A^{i,IB}_t$  & $L^{i,IB}_t$ \\
                             & & \\
  \hline  & & \\
           Household-related & $A^{i,HH}_t$ & $L^{i,HH}_t$ \\
                && \\
   					& $X^i_t$ & \\
					  & & \\
               \hline & &\\
               External  & $Y^i_t$ & $Z$ \\  
                                   & & \\
  \hline \hline Equity &  & $E^i_t$
\end{tabular}
\end{center}
\caption{Bank $i$'s Balance Sheet at Time $t$. \textit{It consists of: (i) the interbank assets $A^{i,IB}_t$ and liabilities $L^{i,IB}_t$ resulting from interbank lending and borrowing; (ii) assets $A^{i,HH}_t$ and liabilities $L^{i,HH}_t$ resulting from loans to and deposits from client households; (iii) an additional risk-free asset (e.g. a bond) $X_t^i$ in which bank $i$ invests some household deposits; (iv) an external (risky) asset $Y_t^i$ and an external liability $Z$. We can then define the equity of a bank $i$ as its assets minus its liabilities: $E_t^i = A^{i,IB}_t + A^{i,HH}_t+X_t+ Y^i_t  - L^{i,IB}_t - L^{i,HH}_t - Z$. }}
\label{tab:balance_sheet}
\end{table}%

At any time $t$, a bank $i$ can be in either of two states $\theta^i_t \in \{0,1\}$, where $\theta^i_t = 1$ means that the bank is `bankrupt' whereas $\theta^i_t = 0$ means that the bank is `not bankrupt'. A bankruptcy occurs when its equity becomes negative, i.e. $E^i_t<0$. To generate exogenous defaults, 
we assume that at any time $t$, the risky asset $Y_t^i$ can undergo a negative jump and lose its full value. In the absence of this negative shock, the risky asset preserves its initial value $Y_0^i$. The value of the risky asset $Y^i_t$ can thus be expressed as $Y^i_t = Y_0^i (1- \mathbbm{1}_{\{ N_t > 0\}})$, where $N_t$ is a counting process with $N_0=0$ and hazard rate $\gamma^i$ so that $(N_s - N_t) \sim Poiss(\gamma^i \cdot (s-t))$. If this event happens for the first time at some time $t$, then $t=T$ (the terminal time). At time $t=T$, the bankrupt bank becomes unable to honor its interbank loans and thus may cause the bankruptcy of other banks. Before studying such insolvency cascades, we will first examine how the interbank assets and liabilities are formed dynamically.

\subsection{Equilibrium Matching of Lenders and Borrowers}
\subsubsection{Liquidity Shocks}
\label{sec:liquidity_shocks}

In this section, we describe the strategic interactions that drive the formation of the interbank network of assets and liabilities.

At each discrete time $t \in \{0, 1, 2, ... \lfloor T \rfloor \}$, each bank $i \in \mc{N}$ receives a liquidity shock $\epsilon_t^i$ from the following distribution  

\[ \epsilon_t^i =
  \begin{cases}
    +1       & \quad \text{with prob.} \  y/2 \\
    -1    & \quad \text{with prob.} \ y/2 \\
        0    & \quad \text{with prob.} \ 1-y \\
  \end{cases}
\]
where $y \in [0,1]$. Here, $\epsilon_t^i = 1$ means that bank $i$ is in supply of one unit\footnote{In Section \ref{sec:VariableLoanSize} of the Appendix, we discuss a model extension that allows for liquidity shocks of various sizes.} of liquidity and thus that bank $i$'s household clients have deposited one unit of cash. $\epsilon_t^i = -1$ means that bank $i$ is in demand of one unit of liquidity and thus that bank $i$'s household clients want to borrow one unit of cash. $\epsilon_t^i = 0$ means that bank $i$ did \textit{not} receive a liquidity shock (is neither in demand nor in supply of liquidity). The $\epsilon_t^i$ are assumed to be i.i.d. across banks and across time.

Banks in demand of liquidity will try to borrow from banks that are in excess of liquidity on the interbank market. At each $t$, those liquidity shocks therefore induce two sets of banks: the set of potential borrowers on the interbank market $\mc{B}_t=\{i: \epsilon_t^i < 0\}$ and the set of potential lenders on the interbank market $\mc{L}_t=\{i: \epsilon_t^i > 0\}$. 

\subsubsection{Bilateral Contracts}
\label{sec:i_rates_def_prob}

We study a simple bilateral contracting mechanism by which a bank $i\in \mc{L}_t$, in excess of liquidity, can lend money to another bank $j\in \mc{B}_t$ at a (per period) rate $r_i$, augmented by a risk premium $h_{ij}$. This risk premium reflects the banks' view of the probability that bank $j$ will default on the loan. $r_i$ is the rate that bank $i$ pays on the deposits of its household clients.

In order to simplify the analysis, we assume that the loans that will be formed between banks have identical maturities of $S$ periods. This does not affect the nature of our results and a discussion of how this assumption can be relaxed is provided in Section \ref{sec:VariableTimeToMaturity} of the Appendix.

\begin{assumption}
The time to maturity $s^t_{ij}$ of an interbank loan made at time $t$ between banks $i$ and $j$  is equal to $S \in \mathbbm{N}_+$, i.e. $s^t_{ij}= S$ periods.
\label{as:random_maturity}
\end{assumption}

A lending bank $i$ that lends to a borrowing bank $j$ thus has an expected payoff
\begin{equation}
\Pi^i_{\lambda}(j) = \frac{1}{(1+r_i)^S}(1-\rho_{t,S}^j)(1 + r_i + h_{ij})^S - 1
\label{eq:U_lender}
\end{equation}
where $\rho_{t,S}^j = \mathbbm{P}\{ {t'} \in [t,t+S]: E_{t'}^j < 0 \}$ is the probability\footnote{An expression for $\rho_{t,S}^j$ will be given later in Section \ref{sec:banks_beliefs}.} that the borrowing bank $j$ will default on this $S$-period loan and $h_{ij}$ is the risk premium charged to bank $j$ to compensate for this credit risk. Equation (\ref{eq:U_lender}) is the expected payment received from bank $j$ at maturity (assuming no recovery in the event of default) minus the amount that is lent immediately. The expected payment received at maturity is discounted at rate $r_i$, the rate at which bank $i$ borrows from its household clients (i.e., the rate paid on deposits).

Risk premia are set so as to render the lender indifferent between a risky loan and a risk-free loan. Thus, two loans have the same expected payoff. Using Eq. (\ref{eq:U_lender}), it is simple to derive a formula for a fair risk premium and this is formalized in the following lemma.
\begin{lemma}[Risk Premia]
The fair risk premium set by a lending bank $i \in \mc{L}_t$ to a borrowing bank $j \in \mc{B}_t$ with default probability $\rho_{t,S}^j$ is $h_{ij} = \frac{1 + r_i}{(1-\rho_{t,S}^j)^{1/S}}-1 - r_i$.
\label{lem:RiskPremia}
\end{lemma}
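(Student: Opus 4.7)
The plan is essentially a one-line algebraic inversion of the expected-payoff formula in Eq.~(\ref{eq:U_lender}), using the indifference condition stated in the paragraph immediately preceding the lemma. First I would spell out what ``fair'' means here: a risk-free loan (for which $\rho_{t,S}^j = 0$ and $h_{ij}=0$) yields an expected payoff of $\Pi^i_\lambda = (1+r_i)^{-S}(1+r_i)^S - 1 = 0$. Indifference between the risky loan to $j$ and the risk-free benchmark therefore amounts to requiring $\Pi^i_\lambda(j) = 0$.

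Next I would substitute this into~(\ref{eq:U_lender}), obtaining the equation
\[
\frac{(1-\rho_{t,S}^j)(1+r_i+h_{ij})^S}{(1+r_i)^S} \;=\; 1.
\]
Solving for $h_{ij}$ is then straightforward: multiply both sides by $(1+r_i)^S/(1-\rho_{t,S}^j)$, take the positive $S$-th root (which is well defined because $1+r_i>0$ and $\rho_{t,S}^j \in [0,1)$, so both sides are positive), and subtract $1+r_i$. This yields exactly $h_{ij} = (1+r_i)/(1-\rho_{t,S}^j)^{1/S} - 1 - r_i$, as claimed.

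There is no real obstacle in this proof; the only thing worth flagging is the implicit assumption $\rho_{t,S}^j < 1$ (if the borrower were certain to default the risk premium would be infinite, consistent with the formula blowing up), and the choice of the principal $S$-th root, which is unambiguous because we are on the positive real axis. Since Lemma~\ref{lem:RiskPremia} is purely a definitional consequence of the indifference condition and the functional form of $\Pi^i_\lambda$, the proof consists of nothing more than these algebraic manipulations, and I would present it in three or four displayed lines.
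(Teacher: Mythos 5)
Your proof is correct and follows essentially the same route as the paper: impose the indifference condition $\Pi^i_\lambda(j)=\Pi^i_\lambda(k)$ with $k$ a risk-free borrower (you explicitly compute this benchmark to be $0$, which the paper leaves implicit) and invert Eq.~(\ref{eq:U_lender}) for $h_{ij}$. Your side remarks on $\rho_{t,S}^j<1$ and on taking the positive $S$-th root are reasonable clarifications but change nothing substantive.
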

Substituting $h_{ij}$ in Eq. (\ref{eq:U_lender}), it is easy to see that the lending bank $i$ derives the same expected payoff from lending to any bank $j$. We assume throughout that default probabilities $\rho_{t,S}^j$ are common knowledge so that any lending bank can compute the risk premia $h_{ij}$. We will see later in Section \ref{sec:banks_beliefs} how $\rho_{t,S}^j$ can be computed. 

A borrowing bank $j$ that borrows from bank $i$ has an expected payoff
\begin{equation}
\Pi^j_{\beta}(i) = 1 - \frac{1}{(1+r_j)^S}(1 + r_i + h_{ij})^S. 
\label{eq:U_borrower}
\end{equation}
This is the amount that it receives immediately minus the discounted payment that it will have to make at maturity, in $S$ periods. This future payment is again discounted by the rate $r_j$ at which it can borrow money from its own household clients (the rate paid on deposits). Note that a borrowing bank does not consider its own risk of default when computing its expected payoff. Doing so would however not affect our results, as will become clear later.

A borrowing bank $j$, maximizes its expected payoff by trying to borrow from the bank that offers the lowest lending rate $r_{ij} = r_i+h_{ij}$.




\subsubsection{Preference Lists}
The bilateral contracting mechanism just described induces preferences. Indeed, banks have preferences regarding which other bank they would like to trade with: If a bank $j$ is in need of liquidity (i.e $j \in \mc{B}_t$), it would prefer borrowing from the bank that offers the lowest lending rate $r_{ij}$. All banks also have a reservation rate $\bar{r}_j$ so that they prefer not borrowing from a bank $i$ offering a rate $r_{ij}> \bar{r}_j$ that is too high. Using standard tools from the literature on matching markets\footnote{E.g. see \cite{GaleShapley1962} or \cite{roth1992two}.}, the preferences of a borrowing bank $j$ can therefore be represented by an ordered list $P^j_{\beta}$ on the set of potential lenders $\mc{L}_t$. Thus a borrower $j$'s preferences are of the form $P^j_{\beta}=a, b, j, c ...$ indicating that its first choice is to borrow from lender $a$, its second choice is to borrow from lender $b$, its third choice is \textit{not} to borrow from anyone (i.e. hence a preference for itself $j$), its fourth choice is to borrow from lender $c$ etc. We assume that the $r_i$'s can be \textit{strictly} ordered\footnote{This follows when, for example, $r_i$ is drawn from a continuous distribution. The $r_i$'s can then be (almost surely) strictly ordered and this induces strict preferences over the set of potential lenders $\mc{L}_t$.}. Lenders are then strictly ordered according to the interest rates they offer: $r_{aj} < r_{bj} < \bar{r}_j< r_{cj} < ...$ . 

On the other hand, if a bank $i \in \mc{L}_t$ is in supply of liquidity, the risk premium $h_{ij}$ renders it indifferent as to which other bank $j \in \mc{B}_t$ it lends to. We can denote its preference list by $P^i_{\lambda} = d \sim e \sim f \sim ...$ .

In the remainder of this paper, we write $P^j_{\beta}(a) \succ P^j_{\beta}(b)$ to mean that $j$ prefers borrowing from $a$ than from $b$. Similarly,  we will write $P^i_{\lambda}(d) \sim P^i_{\lambda}(e)$ to express that $i$ is indifferent between lending to $d$ or to $e$. In Section \ref{sec:other_risk_mang_strat}, we will study risk management strategies by which a lender has a strictly-ordered list of preferences over borrowers. Thus, it will then be possible to have $P^i_{\lambda}(d) \succ P^i_{\lambda}(e)$.

\subsubsection{Two-Sided Matching}

Denote by $\textbf{P}$ the set of preferences lists:
 \begin{equation}
 \textbf{P} = \{P^a_{\beta},P^b_{\beta}, P^c_{\beta}, ..., P^d_{\lambda}, P^e_{\lambda}, P^f_{\lambda}, ...\}. 
 \end{equation}
The interbank market for liquidity at time $t$ is denoted by the triple $(\mc{B}_t,\mc{L}_t,\textbf{P})$. An equilibrium outcome of the interbank market at time $t$ is a bipartite graph representing a set of matches between potential lenders and borrowers. In general, not every bank may be matched -- some banks may not be able to trade because all the liquidity may be exchanged between other banks. Some banks may also not trade because the terms of trade are such that they prefer not to trade (i.e. $\bar{r}_j < r_{ij}$).

\begin{definition}[Matching]
A matching $\mu_t$, at time $t$, is a one-to-one correspondence from the set $\mc{N}$ onto itself such that for any $b \in \mc{B}_t$, if $\mu_t(b) \neq b$, then $\mu_t(b) \in \mc{L}_t$ and for any $l \in \mc{L}_t$, if $\mu_t(l) \neq l$, then $\mu_t(l) \in \mc{B}_t$.
\end{definition}

\begin{example}
For example, let $\mc{N}=\{1,2, ..., 9\}$, $\mc{L}_t=\{1,2,3,4 \}$ and $\mc{B}_t=\{5,6,7,8,9 \}$. Then we may write 
\begin{equation*}
\mu_t=
\begin{matrix}
4 &1 & 2 & 3 & (5) \\
6 & 7 & 8 & 9 & 5 
\end{matrix}
\end{equation*}
so that $\mu_t(4)=6$ and $\mu_t(6)=4$, and thus bank $4$ lends to bank $6$, $\mu_t(1)=7$ and $\mu_t(7)=1$ and thus bank $1$ lends to bank $7$ and so on. Note that $\mu_t(5)=5$, and thus no one lends to bank $5$, which remains alone (or unmatched).
\label{ex:matching}
\end{example}

A matching induces a directed bipartite graph $M_t = \{\mc{B}_t,\mc{L}_t,\mc{E}_t\}$ on the sets of potential lenders and borrowers, where $\mc{E}_t =\{ij: \mu(i) \neq i \text{ and } i \in \mc{L}_t, \ \mu(j) \neq j \text{ and } j \in \mc{B}_t \}$ is the set of directed edges connecting lenders to borrowers. The weight of each edge is the amount of liquidity exchanged, i.e. $|\epsilon^i_t|=1$. Note that there are no self loops in $M_t$. The self match $\mu_t(5)=5$ in Example \ref{ex:matching} is thus excluded from $M_t$. This is illustrated in Fig. \ref{fig:matching}.

\begin{figure*}
  \centerline{
\includegraphics[scale=0.65]{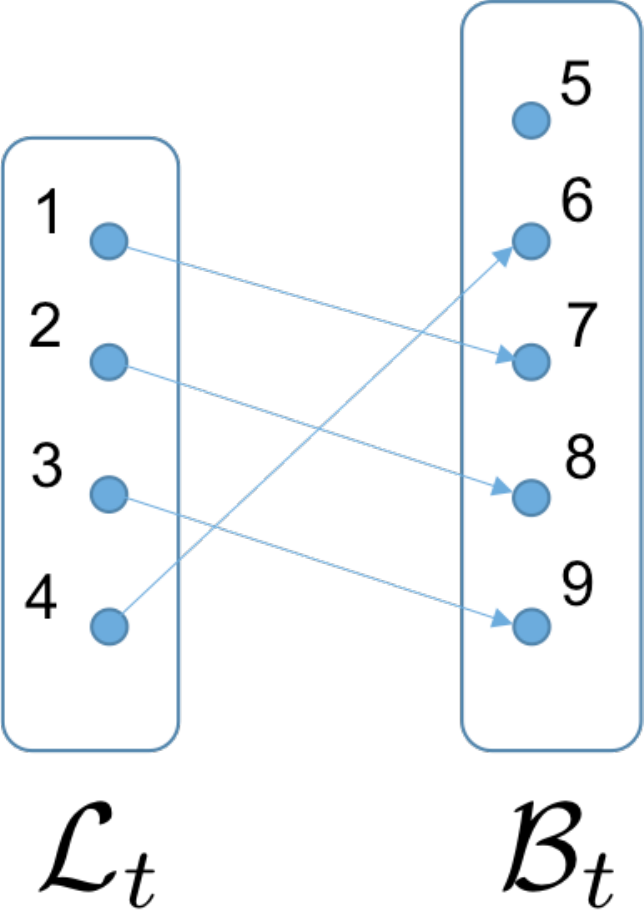}
  }
  \caption{The bipartite graph $M_t$ induced by the matching $\mu_t$ between the set of lenders $\mc{L}_t$ and the set of borrowers $\mc{B}_t$ in Example \ref{ex:matching}.}
  \label{fig:matching}
\end{figure*}

For a matching to credibly emerge from the banks' individual decisions, it has to be \em strategically stable\em, i.e. no bank should be able to gain (i.e. increase its payoff) by behaving differently. This means that no two banks on opposite sides of the market (i.e. a lender and a borrower) should be better off by dropping their matched counter-parties and trading with each other instead. Likewise, if their matched counter-parties are indifferent, no group of borrowers should benefit from agreeing to swap their assigned lenders. Finally, no single  bank should benefit from unilaterally refusing to trade with its matched counter-party. This leads us to the definition of a \textit{stable matching}, which is the basis of our equilibrium concept.

\begin{definition}[Stable Matching]
\label{def:stableMatching}
A matching $\mu^*_t$ is stable if : 

\item (I) (Pairwise deviation) For all $i,l \in \mc{L}_t$ and $k,j \in \mc{B}_t$ such that $\mu_t^*(i)=j$ and $\mu_t^*(k)=l$, it cannot be that both $P^i_{\lambda}(k) \succ P^i_{\lambda}(j)$ and $P^k_{\beta}(i) \succ P^k_{\beta}(l)$; 

\item (II) (Coalitional deviation) Let $\vec{b} \in \mc{B}_t$ be a set of borrowers such that their assigned lenders are indifferent between any borrower in $\vec{b}$, i.e. for any $j, k \in \vec{b}$, $P^{\mu^*_t(j)}_{\lambda}(j) \sim P^{\mu^*_t(k)}_{\lambda}(j)$. Then there cannot be another matching $\mu_t^{'}$ such that for all $j \in \vec{b}$, $P^{j}_{\beta}(\mu^{'}_t(j)) \succ P^{j}_{\beta}(\mu^{*}_t(j))$;

\item (III) (Unilateral deviation) For any $j \in \mc{B}_t$ such that $\mu_t^*(j) \neq j$, it cannot be that  $P^j_{\beta}(j) \succ P_{\beta}^j(\mu_t^*(j))$, and for any $j \in  \mc{B}_t$, it cannot be that $P^j_{\beta}(k) \succ P_{\beta}^j(\mu_t^*(j))$ for some $k \in  \mc{L}_t$ such that $\mu_t^*(k) = k$.

\label{def:stable_matching}
\end{definition}

Condition $(I)$ states that in a stable matching, no two banks on opposite sides of the market (i.e. a lender and a borrower) can benefit from dropping their current trading partners and agreeing to trade together instead. Condition (II) states that no group of borrowing banks can agree to swap counter-parties if their lenders are indifferent. Finally, condition $(III)$ simply states that in a stable matching, no single borrowing bank can benefit from \textit{unilaterally} breaking its current trading relationship and not trading with anyone or trading with a different unmatched lender.

This concept of a stable\footnote{Due to the presence of Condition II, some would actually call this a \emph{coalitionally stable} matching. However for brevity we will simply call it \emph{stable} throughout the article.} matching is similar to that introduced in \cite{GaleShapley1962}, which has been widely used in two-sided matching markets such as the matching of students to schools (\cite{abdulkadiroglu2003school}), medical school graduates to hospitals (\cite{roth1984evolution}, \cite{roth1999redesign}), as well as kidney donors to recipients (\cite{roth2003kidney}), for example\footnote{Our equilibrium concept is also similar to concepts introduced in the network formation games literature, e.g. \cite{JacksonWatts2002}, \cite{jackson1996strategic}. A key difference is that we form a network dynamically through a sequence of equilibrium matchings, instead of forming it statically. Our equilibrium concept is thus better adapted to the formation of a financial network, which takes place dynamically, through the formation of equilibrium matchings between lenders and borrowers.}. The standard matching problem however typically assumes that agents on each side of the market have strict\footnote{See \cite{irving1994stable} for generalizations that account for indifference through partial orders.} preferences over the other side. While condition (I) captures this, condition (II) allows us to deal with the indifference of the lenders, which occurs with the contracting mechanism introduced in Section \ref{sec:i_rates_def_prob}. To handle this case, we allow agents on the borrowing side to swap counter-parties if they benefit from doing so.

Given a stable matching $\mu_t^*$ at time $t$, define the liquidity exchanged (or transaction volume) as 

\begin{equation}
Vol(\mu_t^*) = \frac{1}{2} \sum_{i \in \mc{L}_t \bigcup \mc{B}_t} \mathbbm{1}_{ \{ \mu_t^*(i) \neq i \} }.
\end{equation}

\begin{proposition}[Equilibrium Multiplicity under Bilateral Contracting]
Let $(\mc{B}_t,\mc{L}_t, \textbf{P})$ be a market for liquidity at time $t$ and let $i \in \mc{L}_t$ and $j \in \mc{B}_t$. Under a bilateral contracting mechanism, any matching $\mu_t$ such that $r_{ij} < \bar{r}_j$ for any $\mu_t(i)=j$ and $r_{ij} < r_{kj}$ for any $k \in  \mc{L}_t$ such that $\mu_t(k)=k$ is stable, i.e. $\mu^*_t=\mu_t$. We denote by $\mc{EQ}_t$ the set of such equilibria. Moreover, the trading volume at time $t$ is bounded as follows: $Vol(\mu_t^*) \leq min(|\mc{B}_t|,|\mc{L}_t|)$.
\label{prop:EqMultiplicity}
\end{proposition}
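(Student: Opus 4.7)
The plan is to verify all three stability conditions of Definition \ref{def:stableMatching} for a matching $\mu_t$ satisfying the stated hypotheses, and then prove the volume bound by counting.

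First I would observe that Condition (I) is vacuously satisfied. By Lemma \ref{lem:RiskPremia}, any lender $i$ sets risk premia so that $\Pi^i_\lambda(j)$ is the same for every borrower $j$, so $P^i_\lambda$ is totally indifferent across $\mc{B}_t$. Hence the strict preference $P^i_\lambda(k) \succ P^i_\lambda(j)$ in (I) can never hold, and the conjunction required for a pairwise block is impossible.

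Next I would handle Condition (II), which is the main obstacle since the indifference of lenders allows coalitional swaps. Let $\vec{b} = \{j_1,\dots,j_m\}$ be a coalition of borrowers matched to lenders $l_k = \mu_t^*(j_k)$. A deviation $\mu_t'$ is a reassignment of the \emph{same} multiset of lenders to $\vec{b}$, i.e.\ $\mu_t'(j_k) = l_{\sigma(k)}$ for some permutation $\sigma$ of $\{1,\dots,m\}$. From Lemma \ref{lem:RiskPremia}, $r_{ij} = (1+r_i)(1-\rho^j_{t,S})^{-1/S} - 1$, which is strictly monotone in $r_i$ for fixed $j$. Therefore, by Eq.\ (\ref{eq:U_borrower}), $P^{j_k}_\beta(l_{\sigma(k)}) \succ P^{j_k}_\beta(l_k)$ is equivalent to $r_{l_{\sigma(k)}} < r_{l_k}$. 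For the deviation to block the matching, this strict inequality must hold for every $k$, but summing yields $\sum_k r_{l_{\sigma(k)}} < \sum_k r_{l_k}$, contradicting $\sum_k r_{l_{\sigma(k)}} = \sum_k r_{l_k}$ since $\sigma$ is a permutation. Hence (II) holds.

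For Condition (III), let $j$ be a matched borrower with $\mu_t^*(j) = i$. The hypothesis $r_{ij} < \bar{r}_j$ and Eq.\ (\ref{eq:U_borrower}) imply $P^j_\beta(\mu_t^*(j)) \succ P^j_\beta(j)$, ruling out a profitable dropout. Similarly, for any unmatched lender $k$ with $\mu_t^*(k)=k$, the hypothesis $r_{ij} < r_{kj}$ together with monotonicity of $r_{ij}$ in $r_i$ gives $P^j_\beta(\mu_t^*(j)) \succ P^j_\beta(k)$, ruling out a profitable switch to an unmatched lender. Hence (III) holds and $\mu_t^* = \mu_t$ is stable; the set of all such matchings forms $\mc{EQ}_t$.

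Finally, for the volume bound, I would note that by Definition 1 each bank is matched to at most one counterparty, and by construction each edge in $\mc{E}_t$ links exactly one element of $\mc{L}_t$ to one element of $\mc{B}_t$. Since the liquidity shocks are $\pm 1$, each edge contributes $1$ to $Vol(\mu_t^*)$. The number of edges is therefore bounded by the size of the smaller side of the bipartition, giving $Vol(\mu_t^*) \leq \min(|\mc{B}_t|, |\mc{L}_t|)$. The main conceptual obstacle is the coalitional part, handled via the permutation/sum-conservation argument above; everything else follows from the monotonicity of $r_{ij}$ in $r_i$ and the reservation threshold $\bar{r}_j$.
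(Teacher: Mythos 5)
Your proof is correct and follows essentially the same structure as the paper's, verifying conditions (I)--(III) of the stable matching definition via the indifference of lenders, the common ordering of lenders by $r_i$, and the reservation-rate hypotheses, and then bounding the volume by the size of the smaller side of the bipartition. Your permutation/sum-conservation argument for Condition (II) is a slightly more explicit rendering of the paper's appeal to homogeneous borrower preferences, making concrete why no coalition can strictly improve by reshuffling their assigned lenders.
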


The above proposition says that any matching such that the lending rate charged is strictly lower than a bank's reservation rate $\bar{r}_j$ and is strictly lower than the lending rate offered by any unmatched lender can be sustained in equilibrium. In fact, the lenders being indifferent as to which bank they lend to, they will agree to trade with any borrowing bank. Borrowing banks have strict preferences as to which bank they borrow from: they favor a lender that offers the lower rate and trade with it, if it is available. If the rate offered is higher than its reservation rate, the borrowing bank does not trade with that lender. The total transaction volume exchanged (the total number of loans extended) is bounded by the cardinality of the smallest of the sets of lenders and borrowers. 

As we will see in Section \ref{sec:SR}, this multiplicity of equilibria leads to many possible network configurations with varying levels of systemic risk. We first examine how an equilibrium matching $\mu^*_t$ affects the banks' balance sheets.

\subsection{Effect of Equilibrium Matching on Balance Sheets and Interbank Network}

\subsubsection{Household-Related Assets and Liabilities}
When bank $i$ receives a cash deposit from household clients (i.e. a liquidity shock $\epsilon_t^i = 1$), it tries to lend it to another bank on the interbank market. Deposits have a maturity of $S$ periods. If no other bank needs to borrow it, then bank $i$ invests it in an external\footnote{This assumption does not change the nature of our results, but is convenient. It makes the equity $E^i_t$  independent of interbank activities and thus avoids pathological cases where $E^i_t$ could turn negative for no meaningful reason (a concern in models like \cite{eisenberg2001systemic}).} risk-free asset (e.g. a bond) $X^i_t$ for the whole duration of the deposit (i.e. for $S$ periods). The sum of all household deposits made at times $t'<t$ and with any remaining time to maturity is denoted by $L^{i,HH}_t \geq 0$. 

When households wish to borrow cash from bank $i$, bank $i$ needs to obtain that cash by borrowing on the interbank market. If it succeeds in finding that money on the interbank market, it extends the loan to the households. Household loans have a maturity of $S$ periods. If bank $i$ cannot find that money on the interbank market, it simply declines to extend the loan to the households. The sum of all loans made to households at times $t'<t$ and with any remaining time to maturity is $A^{i,HH}_t \geq 0$.

\subsubsection{Interbank Assets and Liabilities}
\label{sec:IB_ass_liab}
The interbank assets of bank $i$ at time $t$ are loans extended to other banks, so that $A^{i,IB}_t = \sum_{j \neq i} A^{ij,IB}_t$, where $A^{ij,IB}_t \geq 0$ is the total amount currently loaned to bank $j$ at any time $t$. $A^{ij,IB}_t$ can be expressed as $\sum_{t':s_{ij}^{t'}>0} a^{ij,IB}_{t'}$, where $a^{ij,IB}_{t'}$ is a loan extended from $i$ to $j$ at time $t'$. Thus, $A^{ij,IB}_t$ denotes the sum of loans extended by $i$ to $j$ with some remaining time to maturity.
Likewise, the interbank liabilities at time $t$ are borrowings from other banks, i.e. $L^{i,IB}_t = \sum_{j \neq i} L^{ij,IB}_t$, where $L^{ij,IB} \geq 0$ is the total amount currently borrowed from bank $j$ at any time $t$. 
This consists of all the loans extended by $j$ to $i$ at times $t' \leq t$ and with any remaining time to maturity.
Note that by symmetry, $L^{ij,IB}_t  = A^{ji,IB}_t$ (i.e. a liability of $i$ to $j$ is an asset of $j$).

Note that in all the above, we neglect the effect of interest rates payments on balance sheets as this would obscure the analysis and would not change the nature of our results.

Note that at any time $t$, $A^{i,IB}_t +X^i_t = L^{i,HH}_t$ since what is deposited by households is either loaned to another bank or invested in the external risk-free asset $X^i_t$ for the duration of the deposit. Likewise, $A^{i,HH}_t = L^{i,IB}_t$ since what is loaned to households is always borrowed on the interbank market. Since the equity of any bank $i$ (assets minus liabilities) is $E_t^i = A^{i,IB}_t + A^{i,HH}_t+X_t+ Y^i_t  - L^{i,IB}_t - L^{i,HH}_t - Z$, it follows that it has a particularly simple form. It is simply $E^i_t =  Y^i_t - Z$. 

As stated in Assumption  \ref{as:random_maturity}, every loan has a time to maturity of $S$ periods. 
The interbank system at time $t$ is therefore the accumulation of loans formed at any time $t' \leq t$ and with any remaining time to maturity. This can be represented by a network 
\begin{equation}
G_t = \big(G_{t-1} \setminus \{ij:t' + S = t\} \big)  \bigcup M_t
\label{eq:Gt}
\end{equation}
where $G_0 = M_0$ and $M_t$ is the directed bipartite graph induced by the stable matching $\mu^*_t$ at time $t$. A loan is a bilateral agreement between $i$ and $j$ that remains in place until its maturity. A directed edge $ij$ thus remains in place until its maturity $t=t'+S$, at which point it is removed from the graph. In graph-theoretical terms, $G_t$ is a directed multigraph, i.e. a network in which several edges can exist between any nodes $i$ and $j$. These represent the different loans that have been made during previous periods and that have not reached their maturities. As discussed earlier, each directed edge has weight $1$, since this is the nominal amount of each loan. This is illustrated in Fig. \ref{fig:IB_network}.

\begin{figure*}
  \centerline{
\includegraphics[scale=0.65]{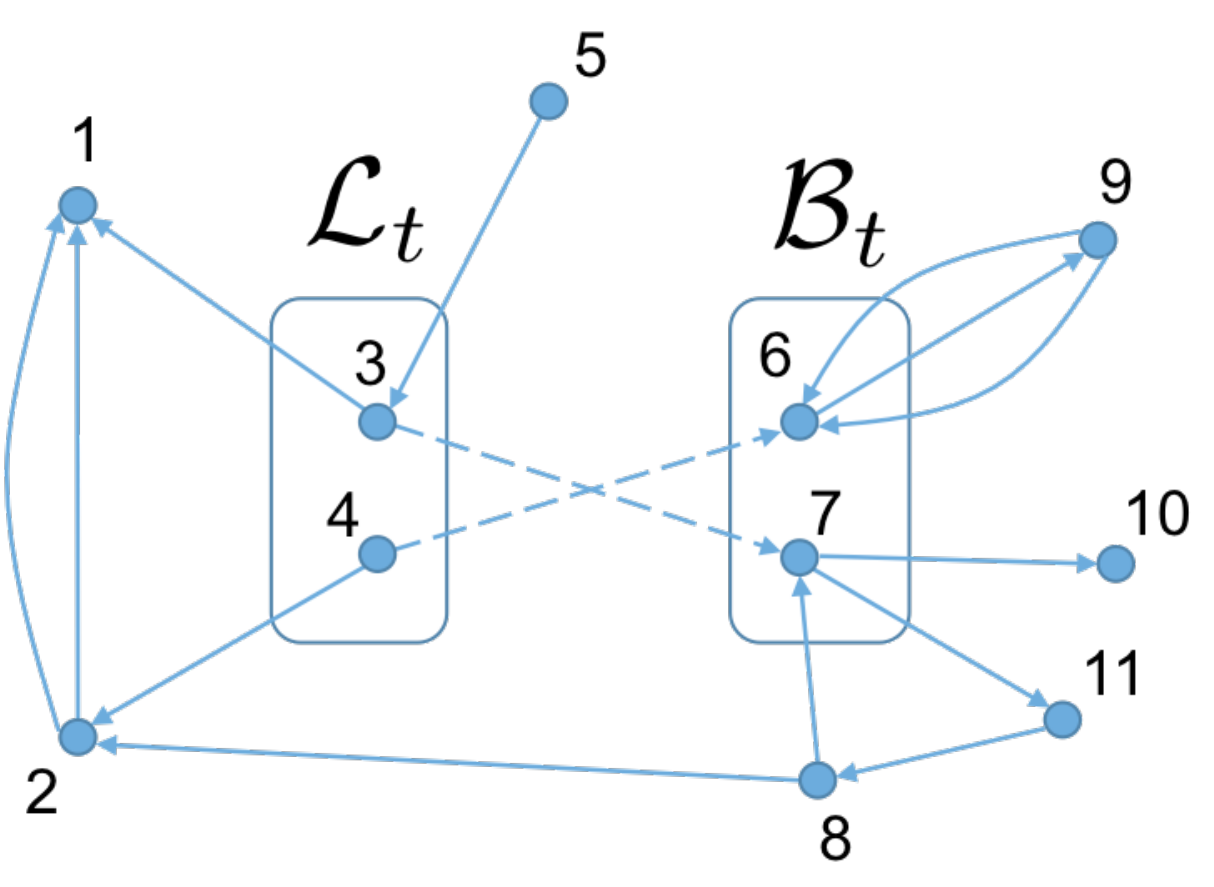}
  }
  \caption{Example of an interbank network at time $t$. \textit{In this example, only banks $3$ and $4$ are lenders at time $t$ and only banks $6$ and $7$ are borrowers at time $t$. The dotted edges represent the new loans formed at time $t$ (i.e. $M_t$ in Eq. (\ref{eq:Gt})). The solid edges represent the loans formed at previous times $t'<t$ and which have not yet reached their maturities (i.e. $t' + D> t$). The network with solid edges can be written as $G_{t-1} \setminus \{ij:t' + D= t\}$ in Eq. (\ref{eq:Gt}).} }
  \label{fig:IB_network}
\end{figure*}

We label by $\bar{A}_t$ the matrix of \textit{net} interbank exposures at time $t$. The $ij$'th entry in this matrix, $\bar{A}^{ij}_t$, represents the net exposure of bank $i$ to bank $j$, i.e. $\bar{A}^{ij}_t = A^{ij,IB}_t - A^{ji,IB}_t$. Since each exposure (loan) has value one, $\bar{A}^{ij}_t$ is simply the number of directed edges from $i$ to $j$ minus the number of directed edges from $j$ to $i$. In Eq. (\ref{eq:A_bar_matrix}) below, we show the net exposure matrix corresponding to the interbank network in Fig. \ref{fig:IB_network}.

\setcounter{MaxMatrixCols}{11}
\begin{equation}
 \bar{A}_t =
\begin{bmatrix}

 0 & -2 & -1 & 0 & 0 & 0 & 0 & 0 & 0 & 0 & 0 \\
  2 & 0 & 0 & -1 & 0 & 0 & 0 & -1 & 0 & 0 & 0 \\
  1 & 0 & 0 & 0 & -1 & 0 & 1 & 0 & 0 & 0 & 0 \\
   0 & 1 & 0 & 0 & 0 & 1 & 0 & 0 & 0 & 0 & 0 \\
    0 & 0 & 1 & 0 & 0 & 0 & 0 & 0 & 0 & 0 & 0 \\
           0 & 0 & 0 & -1 & 0 & 0 & 0 & 0 & -1 & 0 & 0 \\
       0 & 0 & -1 & 0 & 0 & 0 & 0 & -1 & 0 & 1 & 1 \\
        0 & 1 & 0 & 0 & 0 & 0 & 1 & 0 & 0 & 0 & -1 \\
         0 & 0 & 0 & 0 & 0 & 1 & 0 & 0 & 0 & 0 & 0 \\
          0 & 0 & 0 & 0 & 0 & 0 & -1 & 0 & 0 & 0 & 0 \\
           0 & 0 & 0 & 0 & 0 & 0 & -1 & 1 & 0 & 0 & 0 \\
\end{bmatrix}
\label{eq:A_bar_matrix}
\end{equation}

\subsubsection{Exogenous Default Probabilities}
\label{sec:ExDefProb}
In the previous section, we saw that a bank $j$'s equity has a particularly simple form, i.e. $E^j_t=Y_t^j - Z$. A bank can thus go bankrupt exogenously if its equity becomes negative (i.e. if $E^i_t<0$) as a result of a negative shock to the risky asset price $Y^i_t$. The exogenous default probability of a bank thus has a simple form, expressed in the following lemma.

\begin{lemma}[Exogenous Default Probability]
\label{lem:rho^i_S}
The probability that bank $j$ defaults exogenously over the next $S$ periods is
$\bar{\rho}_S^j = \big(1 - e^{-\gamma^{agg} \cdot S} \big) \frac{\gamma^i}{\gamma^{agg}}$, where $\gamma^{agg} = \sum_{j \in \mc{N}} \gamma^j$ is the sum of all hazard rates.
\end{lemma}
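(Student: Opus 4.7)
The plan is to reduce the computation of $\bar{\rho}_S^j$ to a standard competing-risks calculation for independent Poisson processes. By definition, bank $j$ suffers an \emph{exogenous} default in a window of length $S$ iff (i) some jump of some $N^k$ occurs in that window (so the terminal time $T$ falls inside it), and (ii) the very first jump in the aggregate system is a jump of $j$'s own process $N^j$. Case (ii) is what makes the default of $j$ \emph{exogenous} rather than due to an insolvency cascade triggered by a different bank's shock: if some other bank $k \neq j$ jumps first, then $T$ is set by that event, so $Y^j_T = Y_0^j$ and $E^j_T = Y_0^j - Z > 0$, meaning $j$ has not failed exogenously.

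First I would rewrite the target event as the intersection $\{T - t \leq S\} \cap \{\text{the first jump of } N^1,\dots,N^n \text{ belongs to } N^j\}$. Using independence of the $N^k$'s across banks and the memorylessness of the Poisson process (so that we may reset time to $t$), the superposition theorem gives that the pooled process $N^{agg}_s := \sum_{k\in\mc N} N^k_s$ is a Poisson process with rate $\gamma^{agg} = \sum_{k \in \mc N} \gamma^k$. Hence
\[
\P\{T - t \leq S\} \;=\; \P\{N^{agg}_{t+S} - N^{agg}_t \geq 1\} \;=\; 1 - e^{-\gamma^{agg} S}.
\]

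Next I would compute the conditional probability that the first aggregate jump comes from $N^j$. Either invoke the coloring/thinning theorem directly, or argue equivalently via competing exponentials: the waiting time of $N^k$ until its first jump after $t$ is $\mathrm{Exp}(\gamma^k)$, and these waiting times are independent across $k$; the probability that the $j$-th exponential is the minimum is the classical
\[
\P\{\tau^j = \min_k \tau^k\} \;=\; \frac{\gamma^j}{\gamma^{agg}}.
\]
Finally, because for independent Poisson processes the identity of the first jump is independent of its time, the two events above are independent, and multiplying them yields $\bar{\rho}_S^j = \bigl(1 - e^{-\gamma^{agg} S}\bigr)\,\gamma^j/\gamma^{agg}$, as claimed.

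There is essentially no analytic obstacle here; the only point that deserves care is the modelling interpretation in the first paragraph, namely that ``exogenous default of $j$ within $S$ periods'' must be read as ``$j$ is the initial shock in the system and that shock occurs within $S$ periods,'' since the model terminates at the first shock $T$. Once that identification is in place, the remainder is a one-line application of standard properties of independent Poisson processes.
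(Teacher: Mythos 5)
Your proof is correct and follows essentially the same argument as the paper's: superposition of the independent Poisson counting processes to get $T \sim \mathrm{Exp}(\gamma^{agg})$, factoring the event as (first aggregate jump within $S$ periods) times (first jump belongs to $N^j$), and computing the latter by the competing-exponentials identity $\gamma^j/\gamma^{agg}$. The only cosmetic difference is that you make the independence of the identity of the first jump from its timing explicit, whereas the paper implicitly packages the same fact inside the conditional probability $\P\{T = t_i \mid T \leq S\} = \gamma^i/\gamma^{agg}$.
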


In the above lemma, $\bar{\rho}_S^j$ is really the probability\footnote{Note that ignore the time subscript, i.e. we write $\bar{\rho}_S^j$ instead of $\bar{\rho}_{t,S}^j$ because this exogenous default probability is constant through time. This follows from the equity having the simple form $E^j_t=Y_t^j - Z$.} that bank $j$ is the first to go bankrupt as a result of an exogenous shock to its risky asset. In the simple economy that we are studying, this first default then triggers the terminal time $T$ and thus any other bank that goes bankrupt will do so as a result of an insolvency cascade, which we will study in the next section. 


\section{Systemic Risk, Information and Incentives}
\label{sec:SR}

In this section, we examine the externalities created by the formation of the interbank market and we introduce a mechanism that allows a regulator to pin down a unique, systemic risk-efficient equilibrium.

\subsection{Quantifying Systemic Risk}
A bilateral transaction may generate a negative externality in the form of systemic risk. Indeed, when a lending bank $i$ enters into a loan agreement with a borrowing bank $j$, it not only exposes itself to the default risk of bank $j$, but it also exposes its own creditors to it. Indeed, in the event of the default of bank $j$, the loss to bank $i$'s assets may cause it to become insolvent as well (if $\bar{A}^{ij}_t > E_t^i$). This will cause bank $i$ to default on its own loans to other banks (its creditors), which can potentially cause them to become insolvent as well. This cascade of insolvencies can then propagate further throughout the network. 

Using the net exposure matrix, $\bar{A}_t$, we can compute the effect of any initial set of exogenously bankrupt banks on the system. This can be done with the following recursive dynamics, for all $i \in \mathcal{N}$. We introduce $\delta$ to denote the number of iterative steps in the cascading dynamics that takes place \em at \em the terminal time $t=T$.  

\begin{equation}
E_t^i(\delta) = max \Big(0,E^i_t(\delta-1) - \sum_{j: \sigma^j(\delta-1)=F \ \text{and} \ \bar{A}_t^{ij}>0} \bar{A}_t^{ij}  \Big)
\label{eq:E_dynamics}
\end{equation}

\[ \sigma^i(\delta) =
  \begin{cases}
    F \ (failing)      & \quad \text{if} \  E^i_t(\delta)=0 \ \text{and} \ \sigma^i(\delta-1) =H \\
    H \ (healthy)    & \quad \text{if} \ E^i_t(\delta)>0  \\
    
     I  \ (inactive)   & \quad \text{if} \ \sigma^i(\delta-1) =F \ \text{or} \ \sigma^i(\delta-1) =I  \\

  \end{cases}
\]
and
\begin{equation}
\theta_t^i(\delta) = 1 \ \text{if} \ \sigma^i(\delta) = F \text{ or } \theta_t^i(\delta - 1)=1,
\label{eq:theta_dynamics}
\end{equation}
setting the initial conditions $\sigma^i(0)=H$ and $\theta_t^i(0)=0$ for all banks $i \in \mc{N}$ and $E_t^j(0)=0$ for any $j$ part of the set of exogenously defaulted banks.
 
Since the exogenous default of any bank triggers the terminal time $T$, we may write $t=T$ in the above recursions. This recursive dynamics will stop after a finite number of steps $\bar{\delta}$ and any bank $i$ will either be in state $\theta^i_T=\theta^i_T(\bar{\delta})=1$ (bankrupt) or $\theta^i_T=\theta^i_T(\bar{\delta})=0$ (not bankrupt).
This can be seen as a reduced version of the DebtRank mechanism introduced in \cite{battiston2012debtrank} and used in \cite{thurner2013debtrank}. It is indeed closer to a standard default cascade algorithm similar to the one studied in \cite{furfine2003interbank}.

For simplicity, the above cascade mechanism assumes no recovery on defaulted loans and that a bank defaults on its loans \em only \em if it is in state `bankrupt' (i.e. $\theta^i_T(\delta) = 1$). Thus as long as a bank has positive equity $E_T^i(\delta)$, it can pay back its loans in full. This is in the spirit of \cite{eisenberg2001systemic}. These assumptions, however do not affect the nature of our results\footnote{Other variations on this insolvency cascade mechanism can be used and it does not affect the nature of our results. An alternative default mechanism is that of \cite{battiston2012debtrank}, in which a bank pays back a reduced amount to its claimants, if one of its own claimants has defaulted on a loan.}. This recursion avoids reverberations across the financial network in the sense that a bank can only transmit an insolvency shock once, i.e. when it is in state $\sigma^i(\delta) = F$ (failing). It then becomes 'inactive' (i.e., $\sigma^i(\delta+1) = I$) and no longer transmits defaults.

We can now define the impact of the bankruptcy of bank $i$ on the system.

\begin{definition}
The systemic impact of bank $i$ at time $t$ is defined as 
\begin{equation}
SI^i(\bar{A}_t,\vec{E}_t) = \sum_{j \neq i} \mathbbm{1}_{\{ \theta_t^j(\bar{\delta})=1| \theta_t^i(1)=1\}} E^j_t.
\end{equation}
\label{def:SI}
\end{definition}

$SI^i(\bar{A}_t,\vec{E}_t)$ thus represents the value of the total loss to the interbank system (as measured by the total equity lost by bankrupt banks\footnote{$SI^i$ thus ignores the equity that may be lost by non-bankrupt banks. This simple choice of systemic impact measure however does not affect the nature of our results.}) \textit{following} the bankruptcy of $i$. This quantity obviously depends on the topology of the interbank network, as shown by the dependence on $\bar{A}_t$, the matrix of net exposures. It also naturally depends on the vector of equities of all banks $\vec{E}_t = [E_t^1, E_t^2, ..., E^n_t]^{\top}$. We thus define a measure of systemic risk as follows.

\begin{definition}[Expected Systemic Loss]
\begin{equation}
\label{eq:ESL}
ESL(\bar{A}_t,\vec{E}_t) =  \sum_{j=1}^{n} \bar{\rho}_1^j \cdot SI^j(\bar{A}_{t},\vec{E}_t)
\end{equation}
\label{def:ExpectedSystemicLoss}
where $\bar{\rho}_1^j$ is the probability that bank $j$ is the first to go bankrupt exogenously over the next period.
\end{definition}
 
Equation (\ref{eq:ESL}) is the one-period-ahead expected systemic loss at time $t$, for its derivation see \cite{polednaMEXICO}. It is a convenient definition\footnote{This definition of systemic risk is that of \cite{poledna2014elimination}. It is derived on combinatorial arguments based on all possible combinations of initial defaults of institutions.} of systemic risk because it allows to separate the exogenous effects (i.e. $\bar{\rho}_1^j$) associate with external business risk from the network effects (i.e., $SI^j(\bar{A}_{t},\vec{E}_t)$) associated with contagion externalities. Using Lemma \ref{lem:rho^i_S} with $S=1$, we can express $\bar{\rho}_1^j$ as $ \big(1 - e^{-\gamma^{agg}} \big) \frac{\gamma^i}{\gamma^{agg}}$.

A loan extended from $i$ to $j$ at time $t$ (i.e. the addition of a directed edge $ij$ in the network) will thus have the following effect on systemic risk: 
\begin{equation}
\Delta ESL(ij)=ESL(\bar{A}'_{t-1} + \mathbbm{1}_{ij} - \mathbbm{1}_{ji}, \vec{E}_{t})-ESL(\bar{A}'_{t-1}, \vec{E}_{t-1})
\end{equation}
where $\mathbbm{1}_{ij}$ is a matrix of zeros with a $1$ in position $(i,j)$ and $\bar{A}'_{t-1}$ is the net exposure matrix at time $t-1$ after removing the loans that will reach their maturity at time $t$. 
This quantity can be positive or negative: certain transactions can increase systemic risk (e.g. by adding cycles of exposures in the network) while others can decrease it (e.g. by breaking cycles of exposures in the network through bilateral netting). More generally, a matching $\mu_t$ will generate a variation in the expected systemic loss as follows
\begin{equation}
\Delta ESL(\mu_t)=ESL(\bar{A}_t,\vec{E}_{t})-ESL(\bar{A}'_{t-1},\vec{E}_{t-1})
\end{equation}
where $\bar{A}_t = \bar{A}'_{t-1} + \sum_{i:i \in \mc{L}_t, i \neq \mu_t(i)} \mathbbm{1}_{\{i,\mu_t(i)\}} - \sum_{j:j \in \mc{B}_t, j \neq \mu_t(j)} \mathbbm{1}_{\{j,\mu_t(j)\}}$. In other words, $\bar{A}_t$ is the net exposure matrix formed by the  matching $\mu_t$.

\subsection{Banks' Beliefs on Total Failure Probabilities}
\label{sec:banks_beliefs}
The \textit{endogenous} default probability $q_{t,S}^j$ of borrower $j$ at time $t$ on a $S$-period loan is the probability that it defaults as a result of an insolvency cascade. It is not simple to pin down this probability. Indeed, it depends on the  evolution of the network topology over the next $S$ periods. It is not reasonable to expect banks to be able to anticipate that. A more realistic way to assess this default probability is to assume that the network topology remains fixed over the next $S$ periods. We can thus write

\begin{equation}
\label{eq:qSj}
q_{t,S}^j= \sum_{k \neq j} \mathbbm{1}\{ \theta_t^j(\bar{\delta})=1|\theta_t^k(1)=1,\bar{A}_{t-1}, \vec{E}_{t-1} \} \bar{\rho}^k_S \quad .
\end{equation}
In Eq. (\ref{eq:qSj}), we see that the indicator function  $\mathbbm{1}\{ \theta_t^j(\bar{\delta})=1|\theta_t^k(1)=1,\bar{A}_{t-1},  \vec{E}_{t-1} \} $ for the event that bank $j$ fails in a cascade caused by the exogenous failure of $k$ is conditioned on the net exposure matrix in the previous period, $\bar{A}_{t-1}$, and on the vector of equities $\vec{E}_{t-1}$.

Hence the (total) probability of failure of borrower $j$ at time $t$ on a $S$-period loan is conveniently expressed as

\begin{equation}
\rho_{t,S}^j =  \bar{\rho}^j_S + (1- \bar{\rho}^j_S)q_{t,S}^j \quad .
\end{equation}
It is the probability that a borrowing bank either defaults exogenously or as a result of an insolvency cascade. This expression conveniently separates the exogenous effects (intrinsic business risk related to the risky external assets banks invest in) from the contagion effects related to the network of interbank loans. In an ideal system free of systemic risk, a lender would thus only be concerned with the risk that a borrower fails exogenously, i.e. $\bar{\rho}^j_S$.

\subsection{Inefficiency of Equilibrium Matchings}

We will now illustrate how a bilateral contracting mechanism fails to internalize the systemic risk externality that it generates. Indeed, in a bilateral contract, a lender only considers the default risk of a borrower, while a borrower is only concerned with the interest rate that it pays. Neither party have incentives to internalize the systemic risk externality created by the transaction. 
Let us consider the example in Fig. \ref{fig:eq_multiplicity}. Here we assume that all banks fail exogenously with the same probability $\bar{\rho}_S$. Assuming $E^i_t=\$ 50$ million for all banks and each edge is a $\$ 60$-million loan, then the exogenous failure of any bank triggers the failures of all banks down its path. Then the total probabilities of failure thus follow $\rho_{t,S}^6 > \rho_{t,S}^5 > \rho_{t,S}^4$. We assume $r_3<r_2<r_1$ so that borrowers prefer lending bank $3$ over bank $2$ over bank $1$. The networks in $(a)$ and $(b)$ are the only two possible equilibria. Note that lending bank $1$ remains unmatched because it offers the highest lending rate, whereas borrowing bank $6$ remains unmatched because the borrowing rates offered by all lending banks exceed its reservation rate (i.e. $\bar{r}_6<r_{3,6}<r_{2,6}<r_{1,6}$). In Fig. \ref{fig:eq_multiplicity}(a), we see that  the transaction between lending bank $3$ and borrowing bank $5$ creates a substantial amount of systemic risk. Indeed, systemic risk \textit{spreads by lending}. To see this, note that bank $3$ has a high systemic impact: if bank $3$ defaults, it triggers the bankruptcies of banks $7$, $8$ and $9$. The loan from bank $3$ to bank $5$ then causes bank $5$ to inherit this high systemic impact. Indeed, bank $5$'s failure now triggers the bankruptcies of banks $3$, as well as banks $7$, $8$ and $9$. Bank $3$ is also part of the equilibrium matching in the other equilibrium configuration in Fig. \ref{fig:eq_multiplicity}(b), with similar consequences. 
\begin{figure*}
  \centerline{
\includegraphics[scale=0.45,angle=-90]{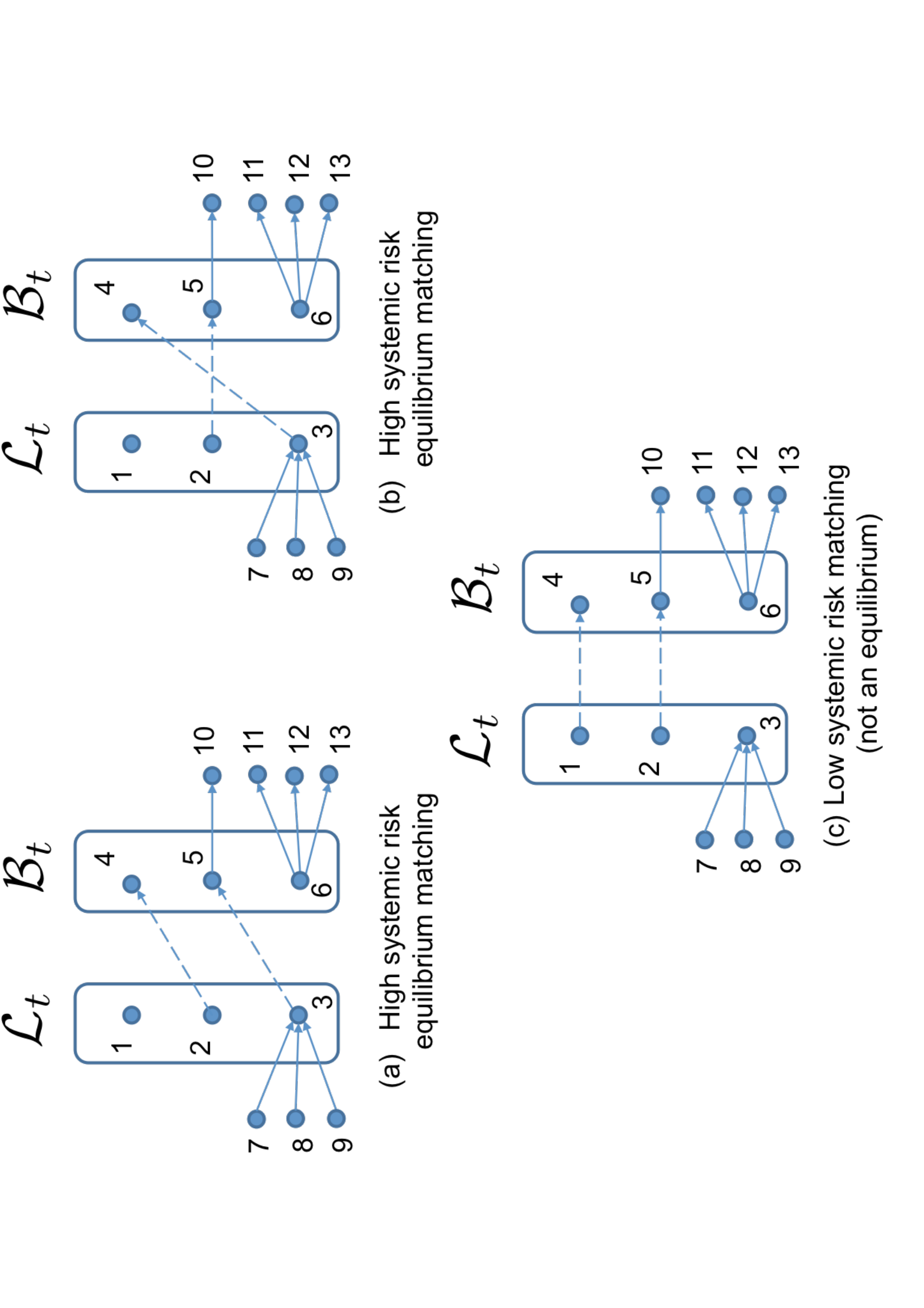}
  }
  \caption{A toy example: Equilibrium multiplicity. \textit{Assume all banks fail exogenously with the same probability $\bar{\rho}_S$. Assuming $E^i_t=\$ 50$ million for all banks and each edge is a $\$ 60$-million loan, then the exogenous failure of any bank triggers the failures of all banks down its path. Then the total probabilities of failure thus follow $\rho_{t,S}^6 > \rho_{t,S}^5 > \rho_{t,S}^4$. We assume $r_3<r_2<r_1$ so that borrowers prefer lending bank $3$ over bank $2$ over bank $1$. We also assume $r_{3,4}<r_{2,4}<r_{1,4}<\bar{r}_4$ and $r_{3,5}<r_{2,5}<r_{1,5}<\bar{r}_5$ so that borrowing banks $4$ and $5$ are willing to borrow from any lending bank while $\bar{r}_6<r_{3,6}<r_{2,6}<r_{1,6}$ so that bank $6$'s higher default risk makes borrowing too expansive. Parts (a) and (b) show the two possible equilibria. Both equilibria have high ESL. ESL in (a) is $ \bar{\rho}_1 \cdot 16 \cdot 50 $ $\$$million while in (b) it is $ \bar{\rho}_1 \cdot 13 \cdot 50 $ $\$$ million. Part (c) shows a low-ELS matching, achieving the same transaction volume, but this matching cannot be sustained in equilibrium. Its ESL is just $ \bar{\rho}_1 \cdot 10 \cdot 50 $ $\$$ million.}}
  \label{fig:eq_multiplicity}
\end{figure*}

The network configuration in Fig. \ref{fig:eq_multiplicity}(c), on the other hand, creates considerably less systemic risk. Indeed, in this equilibrium matching, the lending banks $1$ and $2$ have low systemic impact (in this example, $SI^1=0$ and $SI^2=0$). The bilateral contracting mechanism however does not allow this matching to arise in equilibrium. Indeed, bank $3$ offers the lowest rate of all lending banks and the risk premia it offers are not high enough to deter banks $4$ and $5$ from borrowing from it. The risk premium indeed only considers the borrowing banks' default probability and \textit{not} the systemic risk created by a transaction with a high systemic impact bank. Lending bank $3$ is thus necessarily part of any equilibrium matching.

The multiple equilibria that may emerge under a bilateral contracting mechanism (cf. Proposition \ref{prop:EqMultiplicity}) have different effects on systemic risk. Given a certain transaction volume, the ideal matching $\mu_t$ is the one that minimizes systemic risk $ESL(\bar{A}_t,\vec{E}_t)$. To help us characterize the different equilibria that may emerge, the following definition will be useful.

\begin{definition}[Systemic Risk-Efficient Equilibrium]
Suppose $\bar{A}_{t-1}$ is a net exposure matrix at time $t-1$. Given a market for liquidity $(\mc{B}_t,\mc{L}_t, \textbf{P})$ at time $t$, let $\bar{A}^*_{t}$ be the net exposure matrix formed by the equilibrium matching $\mu^{*}_t$. For any trading volume $v$, an equilibrium $\mu^{*,eff}_t$ is systemic risk-efficient if 
\begin{equation}
\mu^{*,eff}_t \in \underset{\{\mu_t \in \overline{\mc{EQ}}_t : \ Vol(\mu_t) = \nu \}}{\text{argmin}} ESL(\bar{A}_{t},\vec{E}_t)
\end{equation}
where $\overline{\mc{EQ}}_t$ denotes the set of matchings $\mu_t$ such that $r_{ij} < \bar{r}_j$ for all $\mu_t(i)=j$, for $i \in \mc{L}_t$ and $j \in \mc{B}_t$.
\end{definition}

Thus an equilibrium matching is systemic risk-efficient if it minimizes systemic risk, given a certain transaction volume. Note that a systemic risk-efficient matching may not always be part of the set of possible equilibria $\mc{EQ}_t$. In such a case, an equilibrium may be inefficient. This was the case in Fig. \ref{fig:eq_multiplicity}. The two possible equilibria in Fig. \ref{fig:eq_multiplicity}(a)-(b) are inefficient. Indeed, the matching in Fig. \ref{fig:eq_multiplicity}(c) is systemic risk-efficient for a transaction volume $\nu=2$, but it cannot be sustained in equilibrium.  However, in cases where $\mc{EQ}_t = \overline{\mc{EQ}}_t$, then a systemic risk-efficient matching may arise in equilibrium.   

In the next section, we introduce a tax mechanism that allows a prudent regulator to select a \em unique \em systemic risk-efficient equilibrium.

\subsection{Systemic Risk Tax (SRT)}

\subsubsection{Definition and Theoretical Results}

At any (discrete) decision time $t \in \{0,1,2,...\}$, a regulator (e.g. a Central Bank) possessing information about the current credit topology of the interbank system (i.e. knowing $\bar{A}_t$) would like to control the formation of the interbank network by influencing the matching between the sets of potential lenders $\mathcal{L}_t$ and borrowers $\mathcal{B}_t$ so as to achieve a desired level of systemic risk.
The question is thus how can she incentivize the banks so that they form a desired equilibrium matching $\hat{\mu}^*_t$? She can do this by means of a \em transation-specific \em tax, which will have the effect of reordering the borrowers' preferences for the lenders.

Let $\mc{T} = \{\tau_{ij}\}$, where $i \in \mc{L}_t$ and $j \in \mc{B}_t$. $\mc{T}$ is thus a $|\mc{L}_t| \times |\mc{B}_t|$ matrix of transaction-specific taxes. We assume $\tau_{ij} \geq 0$. $\tau_{ij} $ is the mark-up that is applied to the interest rate paid by bank $j$ when it borrows from bank $i$. The borrowing bank $j$ then pays $ r^{\mc{T}}_{ij} = r_i + h_{ij} + \tau_{ij}$ instead of just $ r_{ij}=r_i + h_{ij}$. Under $\mc{T}$, a borrower's expected payoff (cf. Eq. (\ref{eq:U_borrower})) becomes
\begin{equation}
\Pi^j_{\beta,\mc{T}}(i) = 1 - \frac{1}{(1+r_j)^S}(1 + r_i + h_{ij} + \tau_{ij})^S. 
\label{eq:U_borrower_SRT}
\end{equation}
A lender's expected payoff (cf. Eq. (\ref{eq:U_lender})) is left unchanged as the tax $\tau_{ij}$ is collected by the regulator. Thus $\mc{T}$ effectively re-orders the preferences of each borrower\footnote{Note also that since a lender is indifferent to who it lends to, or if it lends at all, the actions of the regulator (i.e. the tax $\tau_{ij}$) do not affect the lenders' equilibrium behavior. The tax only affects the \textit{borrowers'} equilibrium behavior.} over the set of lenders. This allows a regulator to create heterogeneous preferences, i.e. each borrower can now have a different preference list $P_{\beta}^j$. 

Note that since all information about the system is common knowledge, the borrowers' default probabilities $\rho_{t,S}^j$ and the lenders' baseline lending rates $r_i$, for all $i,j\in\mc{N}$, are known to the regulator as well. The latter can compute the risk premia $h_{ij}$. We also assume that the reservation rates $\bar{r}_j$, for all $j \in \mc{N}$, are known to the regulator. The banks' payoffs are thus known to the regulator. We will show that by properly choosing $\mc{T}$, the regulator can reorder each borrower $j$'s preference list $P_{\beta}^j$ such that \em any \em desired matching $\hat{\mu}_t \in \overline{\mc{EQ}}_t$ is sustained as the \em unique \em equilibrium. Since this tax allows her to pin down a systemic risk-efficient equilibrium, we will call this tax a 
\em systemic risk tax \em (SRT).

\begin{theorem}[Equilibrium Uniqueness under Systemic Risk Tax]
Let $(\mc{L}_t,\mc{B}_t,\textbf{P})$ be any market for liquidity at time $t$ and let $i \in \mc{L}_t$ and $j \in \mc{B}_t$. For any possible matching $\mu_t$ such that $r_{ij} < \bar{r}_j$ for all $\mu_t(i)=j$, there exists a SRT $\mc{T}$ such that $\mu_t^{*,\mc{T}} = \mu_t$ is the unique stable matching. The set of possible stable matchings that can be sustained as a unique equilibrium is a superset of the set of possible stable matchings that can arise without the SRT, i.e. $\overline{\mc{EQ}}_t \supseteq \mc{EQ}_t$.
\label{th:SRT_unique_match}
\end{theorem}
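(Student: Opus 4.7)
The plan is to construct an explicit tax matrix $\mc{T}$ that pins down any target matching $\hat{\mu}_t \in \overline{\mc{EQ}}_t$ as the unique stable matching, and then to obtain the inclusion $\overline{\mc{EQ}}_t \supseteq \mc{EQ}_t$ directly from the definitions. The latter is immediate: $\overline{\mc{EQ}}_t$ imposes only the individual-rationality condition $r_{ij} < \bar{r}_j$ for each matched pair, whereas Proposition \ref{prop:EqMultiplicity} characterizes $\mc{EQ}_t$ with the extra requirement that unmatched lenders offer strictly higher rates than the matched one. So the work is entirely in the uniqueness statement.

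The construction I would use sets $\tau_{ij} = 0$ whenever $\hat{\mu}_t(i) = j$ and otherwise chooses $\tau_{ij} \geq 0$ large enough that $r_{ij} + \tau_{ij} > \bar{r}_j$. Since by hypothesis $r_{ij} < \bar{r}_j$ on matched pairs, under $\mc{T}$ each matched borrower $j$ has a strict preference list in which the unique acceptable lender is $\hat{\mu}_t(j)$, while each unmatched borrower finds no lender acceptable. Importantly, the SRT enters only the borrower payoff through Eq.~(\ref{eq:U_borrower_SRT}), so lenders remain indifferent across borrowers, leaving the structure of Definition \ref{def:stable_matching} intact.

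To show $\hat{\mu}_t$ is stable under $\mc{T}$: condition (I) is vacuous because $P^i_\lambda(k) \succ P^i_\lambda(j)$ never holds under lender indifference; condition (II) holds because every matched borrower already obtains her unique acceptable lender, so no coalitional swap can strictly improve every swapping borrower; condition (III) holds since matched borrowers prefer their assigned lender over being unmatched and over any other lender (all above reservation), while unmatched borrowers prefer being unmatched over any lender (all above reservation).

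For uniqueness, I would take an arbitrary $\mu_t' \neq \hat{\mu}_t$ in $\overline{\mc{EQ}}_t$ and exhibit a violation of stability by picking a borrower $j$ with $\mu_t'(j) \neq \hat{\mu}_t(j)$ and splitting into cases. If $\mu_t'(j) = k \in \mc{L}_t$ with $k \neq \hat{\mu}_t(j)$, then by construction $r_{kj}^{\mc{T}} > \bar{r}_j$, so $j$ strictly prefers being unmatched, violating (III). Otherwise $\mu_t'(j) = j$; writing $i = \hat{\mu}_t(j)$, either $\mu_t'(i) = i$, in which case $j$ can unilaterally pair with the unmatched lender $i$ at the now-untaxed rate $r_{ij} < \bar{r}_j$, again violating (III), or $\mu_t'(i) = l \neq j$. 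In the latter sub-case $\hat{\mu}_t(l) \neq i$ by construction, so $r_{il}^{\mc{T}} > \bar{r}_l$, meaning borrower $l$ strictly prefers being unmatched, violating (III). This displacement sub-case is the most delicate step and is precisely the reason the tax must depend on \emph{both} indices $i$ and $j$ rather than only on the borrower's identity; once it does, the case analysis closes cleanly and $\hat{\mu}_t$ is the unique stable matching under $\mc{T}$.
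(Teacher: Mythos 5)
Your proof is correct, and it is a genuine (if modest) variant of the paper's. The paper's construction is more permissive: it chooses $\mc{T}$ so that the target lender $\hat{\mu}_t(j)$ is merely \emph{first} on borrower $j$'s tax-induced preference list, with the remaining acceptable lenders arbitrarily permuted below; other lenders can still lie below $\bar r_j$. Because of that, the paper's uniqueness argument has to invoke condition (II): a candidate stable matching other than $\hat\mu_t$ leaves a coalition $\vec b$ of borrowers short of their top choices, and that coalition deviates to $\hat\mu_t$. Your construction instead sets every off-target tax high enough that $r_{ij}^{\mc T}>\bar r_j$, collapsing each borrower's acceptable set to the singleton $\{\hat\mu_t(j)\}$ (or the empty set). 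That is a special instance of the paper's family of taxes, but it buys you a uniqueness argument resting entirely on condition (III), with a clean three-way case split (wrong-lender, unmatched-but-should-be-matched-with-free-lender, unmatched-but-target-lender-taken) and no appeal to coalitional reasoning at all. Both routes are sound; yours trades generality of the tax for elementarity of the uniqueness proof. Two small cosmetic points: your uniqueness step need not assume $\mu'_t\in\overline{\mc{EQ}}_t$, since any matching with a matched pair at $r_{ij}>\bar r_j$ already fails (III) under nonnegative taxes; and you should note explicitly that a borrower witness $j$ with $\mu'_t(j)\neq\hat\mu_t(j)$ always exists whenever $\mu'_t\neq\hat\mu_t$ (this follows because the matchings are bipartite one-to-one correspondences, so disagreement on the lender side forces disagreement on the borrower side). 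Neither point affects the correctness of the argument.
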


Theorem \ref{th:SRT_unique_match} states that an appropriate choice of SRT $\mc{T}$ can select \em any \em of the multiple equilibria that can arise under a bilateral contracting mechanism. It can also select certain matchings that could not be sustained in equilibrium without the tax. Moreover, under the SRT, this equilibrium is \em unique\em. The intuition is that the preferences of the borrowers can be arbitrarily reshuffled and this is sufficient to create any desired stable matching, irrespectively of the preferences of the lenders. Under this unique equilibrium selected by the tax, there is no coalition of banks that can agree to reshuffle their matched partners so that they all benefit from doing so. Indeed, under the tax, each borrower chooses to trade with its preferred counter-party. Even if we assume limited communication between the borrowing banks, this unique equilibrium can credibly arise from a system in which borrowing banks solicit the regulator (e.g. Central Bank) for quotes on the lending banks. These quotes are the rates $r^{\mc{T}}_{ij}$ at which they can borrow from each lender. Under the SRT $\mc{T}$, they will choose to borrow from the bank offering the lowest rate $r^{\mc{T}}_{ij}$ and this corresponds to the unique equilibrium outcome. Note also that the set of equilibria that can be uniquely sustained under the tax is \textit{larger} than the original set.

Note that in the special case when $\mc{T}_{ij}=\kappa$, for all $i,j$, then $\mc{T}$ reduces to a Tobin-like tax. A Tobin-like tax, on the other hand does not allow a regulator to induce equilibrium uniqueness, nor to increase the set of candidate matchings that can be sustained in equilibrium. It merely reduces the set of possible high-transaction volume equilibria. This is formalized in the next proposition.

\begin{proposition}[Tobin-like tax]
Let $(\mc{B}_t,\mc{L}_t, \textbf{P})$ be a market for liquidity at time $t$ and let $i \in \mc{L}_t$ and $j \in \mc{B}_t$. Let $\kappa$ be a Tobin-like tax, i.e. $r_{ij}^{\kappa} = r_i + h_{ij} + \kappa$. Then, under a bilateral contracting mechanism: 
\item (i) Any matching $\mu_t$ such that $r_{ij}^{\kappa} < \bar{r}_j$ for any $\mu_t(i)=j$ and $r^{\kappa}_{ij} < r^{\kappa}_{mj}$ for any $m \in  \mc{L}_t$ such that $\mu_t(m)=m$ is stable, i.e. $\mu^{*,\kappa}_t=\mu_t$. We denote by $\mc{EQ}_t^{\kappa} $ the set of such equilibria; 

\item (ii) The trading volume at time $t$ is bounded as follows: 
\begin{equation*}
\underset{\mu_t^{*,\kappa} \in \mc{EQ}_t^{\kappa}}{\text{max}} Vol(\mu_t^{*,\kappa}) \leq \underset{\mu_t^{*} \in \mc{EQ}_t}{\text{max}} Vol(\mu_t^{*}).
\end{equation*}
\label{prop:Tobin_tax}
\end{proposition}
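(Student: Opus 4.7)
My plan is to reduce the proposition to Proposition~\ref{prop:EqMultiplicity} by exploiting the key structural observation that a Tobin-like tax $\kappa$ adds the \emph{same} constant to every lending rate and therefore preserves the ordering of rates, while the tax itself is a pure transfer to the regulator that does not enter the lenders' payoff $\Pi^i_{\lambda}(j)$. Hence borrowers' preference lists $P^j_{\beta}$ under $\kappa$ are the same as under no tax (each borrower still prefers the lender with the smallest $r_i$), and lenders remain indifferent across borrowers once risk premia $h_{ij}$ are applied. The only things that change are (a) the reservation constraint, which becomes $r_{ij}+\kappa<\bar{r}_j$, and (b) the absolute payoff to a borrower, which is reduced. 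Both changes simply shrink the set of economically feasible matchings without altering the preference structure.

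For part~(i), I would verify the three stability conditions of Definition~\ref{def:stable_matching} with $r_{ij}^{\kappa}$ in place of $r_{ij}$, mirroring the argument of Proposition~\ref{prop:EqMultiplicity}. Pairwise deviations (I) fail because lenders are indifferent and cannot strictly prefer a deviation; coalitional deviations (II) among borrowers fail because, by the hypothesis $r_{ij}^{\kappa}<r_{mj}^{\kappa}$ for every unmatched lender $m$ and by the fact that each matched borrower $j$ already sits with its best feasible lender given the tax-adjusted rates, no reassignment within the coalition can make every member strictly better off; unilateral deviations (III) fail because $r_{ij}^{\kappa}<\bar{r}_j$ ensures the borrower prefers borrowing to staying unmatched, and by the unmatched-lender condition no switch to an available lender is profitable.

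For part~(ii), I would show the set inclusion $\mc{EQ}_t^{\kappa}\subseteq \mc{EQ}_t$, from which the volume inequality follows immediately. Take any $\mu_t^{*,\kappa}\in\mc{EQ}_t^{\kappa}$. The ordering-preservation argument gives $r_{ij}<r_{mj}$ whenever $r_{ij}^{\kappa}<r_{mj}^{\kappa}$, and since $\kappa\geq 0$ we have $r_{ij}<r_{ij}+\kappa<\bar{r}_j$ for every matched pair. Thus $\mu_t^{*,\kappa}$ satisfies the hypothesis of Proposition~\ref{prop:EqMultiplicity} and lies in $\mc{EQ}_t$. Consequently
\begin{equation*}
\max_{\mu_t^{*,\kappa}\in\mc{EQ}_t^{\kappa}} Vol(\mu_t^{*,\kappa}) \;\leq\; \max_{\mu_t^{*}\in\mc{EQ}_t} Vol(\mu_t^{*}).
\end{equation*}

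\textbf{Main obstacle.} The only delicate point is condition~(II) for coalitional stability, because the uniform shift $\kappa$ does not obviously rule out a coalitional swap among borrowers whose lenders are indifferent. The resolution is the same as in the untaxed case: since every borrower's preference ranking over lenders is identical up to the constant $\kappa$, and since the assigned lender is strictly preferred to every unmatched lender, any permutation within a coalition of borrowers (whose lenders are by hypothesis indifferent) can only move some member to a strictly worse lender, contradicting strict Pareto improvement. Once this is spelled out, the rest of the proposition reduces to bookkeeping with the shifted rates.
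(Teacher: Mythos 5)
Your proof is correct and for part (i) takes essentially the same route as the paper, which simply repeats the verification of Definition~\ref{def:stable_matching} from Proposition~\ref{prop:EqMultiplicity}(i) with the tax-shifted rates $r^{\kappa}_{ij}$ substituted for $r_{ij}$; since $\kappa$ is a uniform constant it preserves the ordering of lenders and lenders remain indifferent, so the three stability conditions check out exactly as you say.

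For part (ii), your argument is actually cleaner than the paper's. You establish the direct containment $\mc{EQ}_t^{\kappa}\subseteq\mc{EQ}_t$: the uniform shift preserves the rate ordering so the unmatched-lender condition is equivalent, and $r^{\kappa}_{ij}<\bar{r}_j$ with $\kappa\geq 0$ gives $r_{ij}<\bar{r}_j$, so any $\mu_t^{*,\kappa}\in\mc{EQ}_t^{\kappa}$ satisfies the defining conditions of $\mc{EQ}_t$. The volume inequality is then an immediate consequence of a maximum over a subset being at most the maximum over the superset. The paper instead introduces the reduced lender sets $\mc{L}_t^{j,\kappa}\subset\mc{L}_t^j$ and compares the maximum-volume slices $\mc{EQ}_t^{\kappa}(\bar{v})$ and $\mc{EQ}_t(\bar{v})$, an argument which, as written, implicitly relies on the same set inclusion you make explicit but expresses it more awkwardly. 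Both proofs buy the same conclusion; your set-containment phrasing is the more transparent route.
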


A Tobin-like tax does not allow a regulator to re-order the preference lists of the borrowers. Indeed all borrowers have homogenous preferences, but some transactions become too expensive and thus not sustainable in equilibrium, hence the possible reduction in transaction volume. A Tobin tax therefore cannot allow a regulator to pin down a unique systemic risk-efficient equilibrium. An appropriate choice of the SRT $\mc{T}$ however can achieve this.

The next result has important implications. It states that a regulator can always choose a transaction-specific tax $\mc{T}$ so as to achieve lower systemic risk \em without sacrificing \em transaction volume. The intuition behind this result is that the transaction-specific tax has the effect of reordering each borrower's preferences for lenders and this is sufficient to achieve effectively any possible matching between lenders and borrowers. This can lead to matchings with lower systemic risk and/or higher trading volume. A Tobin tax, on the other hand, indiscriminately taxes every transaction equally. This has the effect of reducing the set of lenders with which a borrower is willing to trade, \textit{without} reordering the preferences of these borrowers. It therefore simply reduces transaction volume by reducing the set of possible matchings. 

\begin{proposition}[Systemic Risk under Systemic Risk Tax]
Let $(\mc{B}_t,\mc{L}_t, \textbf{P})$ be a market for liquidity at time $t$. Given a net exposure matrix $\bar{A}_{t-1}$ at time $t-1$, let $\bar{A}^{*,\mc{T}}_{t}$, $\bar{A}^{*,\kappa}_{t}$ and $\bar{A}^*_{t}$ be the net exposure matrices formed at time $t$ with a SRT $\mc{T}$, with a Tobin-like tax $\kappa$ and without tax by the equilibrium matchings $\mu^{*,\mc{T}}_t$, $\mu^{*,\kappa}_t$ and $\mu^{*}_t$, respectively. Then,

\item (i) for any $\mu^*_t \in \mc{EQ}_t$, such that $Vol(\mu^*_t) = \nu$, there exists $\mc{T}$ such that $ESL(\bar{A}^{*,\mc{T}}_{t},\vec{E}_{t}) \leq ESL(\bar{A}^*_{t},\vec{E}_{t})$ and $Vol(\mu^{*,\mc{T}}_t) \geq Vol(\mu^*_t)$; In particular, there exists $\mc{T}$  such that $\mu^{*,\mc{T}}_t$ is systemic risk efficient.
 
\item (ii) for any $\mu^{*,\kappa}_t \in \mc{EQ}_t^{\kappa}$, such that $Vol(\mu^{*,\kappa}_t) = \nu$, there exists $\mc{T}$ such that $ESL(\bar{A}^{*,\mc{T}}_{t},\vec{E}_{t}) \leq ESL(\bar{A}^{*,k}_{t},\vec{E}_{t})$ and $Vol(\mu^{*,\mc{T}}_t) \geq Vol(\mu^{*,k}_t) $.

\label{th:ESL}
\end{proposition}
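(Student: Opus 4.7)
The plan is to reduce everything to Theorem \ref{th:SRT_unique_match}: since an SRT can implement any matching in $\overline{\mc{EQ}}_t$ as the \emph{unique} equilibrium, and since both $\mc{EQ}_t$ and $\mc{EQ}_t^\kappa$ sit inside $\overline{\mc{EQ}}_t$, the regulator using the SRT has a strictly larger menu than either the no-tax or Tobin-tax regimes. The proposition will then follow by minimizing $ESL$ over an appropriate volume level set inside $\overline{\mc{EQ}}_t$ and invoking Theorem \ref{th:SRT_unique_match} to realize the minimizer as a unique equilibrium.

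The first step would be to record the two inclusions $\mc{EQ}_t \subseteq \overline{\mc{EQ}}_t$ and $\mc{EQ}_t^\kappa \subseteq \overline{\mc{EQ}}_t$. The former is built into Proposition \ref{prop:EqMultiplicity}, which characterizes $\mc{EQ}_t$ by the condition $r_{ij} < \bar{r}_j$ for every matched pair. The latter follows from Proposition \ref{prop:Tobin_tax}(i): any $\mu^{*,\kappa}_t \in \mc{EQ}_t^\kappa$ satisfies $r_{ij}^\kappa = r_{ij} + \kappa < \bar{r}_j$, and since $\kappa \geq 0$ this yields $r_{ij} < \bar{r}_j$. For part (i), given $\mu^*_t \in \mc{EQ}_t$ with $Vol(\mu^*_t)=\nu$, I would form the finite, non-empty set $\{\mu_t \in \overline{\mc{EQ}}_t : Vol(\mu_t)=\nu\}$ (non-empty since it contains $\mu^*_t$ by the first inclusion) and select a minimizer
\[
\mu^{*,eff}_t \in \underset{\{\mu_t \in \overline{\mc{EQ}}_t : \, Vol(\mu_t)=\nu\}}{\text{argmin}} \, ESL(\bar{A}_t,\vec{E}_t).
\]
Theorem \ref{th:SRT_unique_match} then produces an SRT $\mc{T}$ for which $\mu^{*,\mc{T}}_t = \mu^{*,eff}_t$ is the unique stable matching; by minimality, $ESL(\bar{A}^{*,\mc{T}}_t,\vec{E}_t) \leq ESL(\bar{A}^*_t,\vec{E}_t)$, and by construction, $Vol(\mu^{*,\mc{T}}_t) = \nu = Vol(\mu^*_t)$. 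The systemic-risk-efficiency clause is then immediate from the definition preceding the theorem. Part (ii) goes through verbatim with $\mu^{*,\kappa}_t$ in place of $\mu^*_t$, using the second inclusion to certify that $\mu^{*,\kappa}_t$ is a feasible point of the same minimization.

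No genuine technical obstacle arises: once Theorem \ref{th:SRT_unique_match} is accepted, this proposition is essentially a bookkeeping corollary. The only point worth flagging carefully in the write-up is the inclusion $\mc{EQ}_t^\kappa \subseteq \overline{\mc{EQ}}_t$, which is precisely what makes the SRT a strict generalization of the Tobin tax and prevents the Tobin regime from ever reaching matchings that an appropriate SRT cannot also reach. The conceptual content — namely that the SRT enlarges the implementable set of matchings without sacrificing volume — is what the argument really exploits, rather than any nontrivial computation.
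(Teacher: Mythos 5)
Your proposal is correct and follows essentially the same route as the paper: reduce to Theorem~\ref{th:SRT_unique_match}, note the inclusions $\mc{EQ}_t \subseteq \overline{\mc{EQ}}_t$ and $\mc{EQ}_t^{\kappa} \subseteq \overline{\mc{EQ}}_t$, minimize $ESL$ over a volume-constrained subset of $\overline{\mc{EQ}}_t$, and implement the minimizer as a unique equilibrium. The only (cosmetic) departure is that you minimize over $\{Vol(\mu_t) = \nu\}$ where the paper uses $\{Vol(\mu_t) \geq \nu\}$; both satisfy the proposition's inequalities, and yours in fact makes the ``systemic risk efficient'' clause more immediate since it matches the defining argmin set verbatim.
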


Thus for any outcome of a market for liquidity under a bilateral contracting mechanism, we can design a SRT that achieves lower systemic risk and potentially higher transaction volume. The intuition is that the set of possible high-volume equilibrium matchings that an SRT can uniquely sustain is greater than the set of equilibria under a Tobin tax or no tax at all. 

Proposition \ref{th:ESL} implies that a SRT can pin down a systemic risk-efficient equilibrium for any given transaction volume. 
While a Tobin tax sacrifices transaction volume without having an optimal impact on network topology, SRT allows to minimize systemic risk given a desired transaction volume.

Fig. \ref{fig:eq_multiplicity_taxes} provides a simple illustration of Theorem \ref{th:SRT_unique_match} and Propositions \ref{prop:Tobin_tax} and \ref{th:ESL}. We make the same assumptions on model parameters as in Fig. \ref{fig:eq_multiplicity}: Parts (a) and (b) show the two possible equilibria without tax (as in Fig. \ref{fig:eq_multiplicity}a-b). Parts (c) and (d) show the equilibria under a Tobin-like tax $\kappa$ that causes $r^{\kappa}_{15} > r^{\kappa}_{25} >\bar{r}_5$ and $r^{\kappa}_{35}<\bar{r}_5$ so that the transaction between lending bank $2$ and borrowing bank $5$ becomes too expensive, since the rate charged now exceeds bank $5$'s reservation rate. The Tobin-like tax $\kappa$ however leaves $r^{\kappa}_{24}  <\bar{r}_4$ due to bank $4$ lower default probability. A Tobin tax thus leaves the first equilibrium unchanged ((c) vs (a)), while it reduces the volume of the second equilibrium ((d) vs (b)). Part (e), on the other hand, shows the unique equilibrium matching that can be achieved with a proper choice of SRT $\mc{T}$. This unique equilibrium is systemic risk-efficient for a transaction volume of two loans. One simple choice of $\mc{T}$ is simply to set $\tau_{14}=0$, $\tau_{15}=\tau_{16} >>0$, while $\tau_{25}=0$, $\tau_{24}=\tau_{26} >>0$ and $\tau_{36} >>0$. The desired matches are left untaxed, whereas the undesired ones are taxed (in this simple example, arbitrarily). This guarantees that the desired lenders are on top of each borrower's preference lists and allows this systemic risk-efficient matching to be sustained as a unique equilibrium, without reduction in volume.

\begin{figure*}
  \centerline{
\includegraphics[scale=0.45]{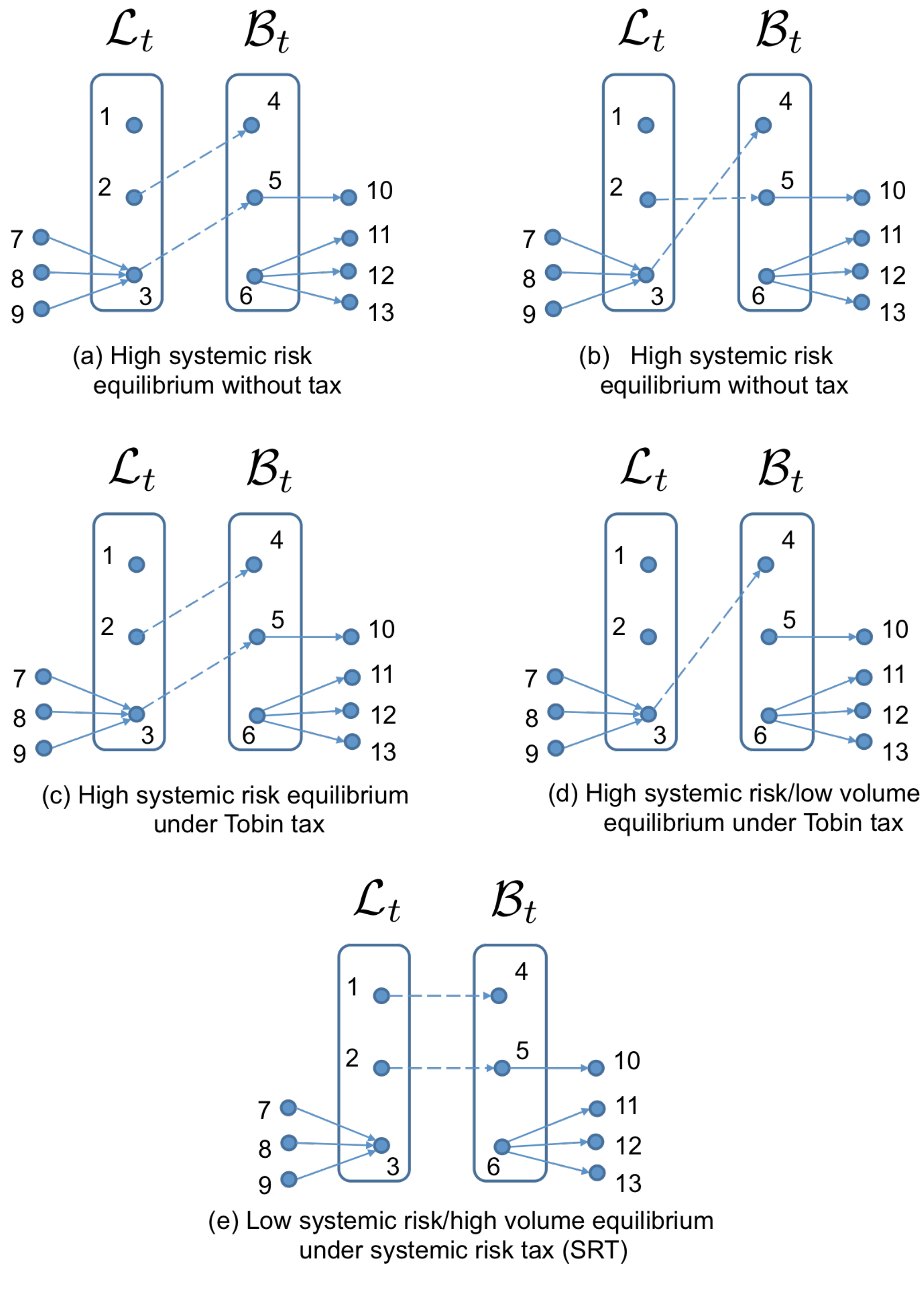}
  }
  \caption{A toy example: Systemic risk tax (SRT) leads to systemic risk-efficient equilibrium. \textit{We make the same assumptions on model parameters as in Fig. \ref{fig:eq_multiplicity}: 
  Parts (a) and (b) show the two possible equilibria without tax (as in Fig. \ref{fig:eq_multiplicity}a-b). Parts (c) and (d) show the equilibria under a Tobin-like tax $\kappa$ that causes $r^{\kappa}_{15} > r^{\kappa}_{25} >\bar{r}_5$ and $r^{\kappa}_{35}<\bar{r}_5$ so that the transaction between lending bank $2$ and borrowing bank $5$ becomes too expensive, since the rate charged now exceeds bank $5$'s reservation rate. The Tobin-like tax $\kappa$ however leaves $r^{\kappa}_{24}  <\bar{r}_4$ due to bank $4$ lower default probability. A Tobin tax thus leaves the first equilibrium unchanged ((c) vs (a)), while it reduces the volume of the second equilibrium ((d) vs (b)). Part (e), on the other hand, shows the unique equilibrium matching that can be achieved with a proper choice of SRT $\mc{T}$. This unique equilibrium is systemic risk-efficient for a transaction volume of two loans. One simple choice of $\mc{T}$ is simply to set $\tau_{14}=0$, $\tau_{15}=\tau_{16} >>0$, while $\tau_{25}=0$, $\tau_{24}=\tau_{26} >>0$ and $\tau_{36} >>0$. The desired matches are left untaxed, whereas the undesired ones are taxed (in this simple example, arbitrarily). This guarantees that the desired lenders are on top of each borrower's preference lists and allows this systemic risk-efficient matching to be sustained as a unique equilibrium, without reduction in volume.}}
  \label{fig:eq_multiplicity_taxes}
\end{figure*}

One way to make use of Proposition \ref{th:ESL} in an optimization problem is to minimize systemic risk given a target level of transaction volume. This is done in the next section.

\subsubsection{Numerical Investigation}
\label{sec:Reg_opt_prob}

As a transaction-specific tax, the SRT allows a regulator (e.g. Central Bank) to minimize systemic risk, while achieving a certain transaction volume. Suppose that she wishes to achieve transaction volume $\nu$. Then at time $t$, she can set the SRT $\hat{\mc{T}}$ by solving the following one-period-ahead optimization problem.


\begin{equation}
\hat{\mc{T}} \in \underset{\mc{T}:Vol(\mu_t^{*,\mc{T}}) = \nu}{\text{argmin}} ESL(\bar{A}_t^{*,\mc{T}},\vec{E}_t) 
\label{eq:minProb}
\end{equation}
where 
\begin{equation}
\bar{A}_t^{*,\mc{T}} = \bar{A}'_{t-1}  + \sum_{i : i \in \mc{L}_t, i \neq \mu_t^{*,\mc{T}}(i) }\mathbbm{1}_{\{i,\mu^{*,\mc{T}}_t(i)\}} - \sum_{j : j \in \mc{B}_t, j \neq \mu_t^{*,\mc{T}}(j)} \mathbbm{1}_{\{j,\mu^{*,\mc{T}}_t(j)\}}
\label{eq:minProbConstr}
\end{equation}
is the net exposure matrix formed with the equilibrium matching $\mu^{*,\mc{T}}_t$ at time $t$, and $ESL(\cdot)$ is the one-period-ahead expected systemic loss at time $t$, as defined in Definition \ref{def:ExpectedSystemicLoss}. $\bar{A}'_{t-1}$ is the net exposure matrix at time $t-1$ after removing the loans that will reach their maturity at time $t$. 



The regulator will thus choose $\hat{\mc{T}}$ such that a desired systemic risk-efficient matching $\hat{\mu}_t$ will be sustained in equilibrium, i.e. $\mu_t^{*,\hat{\mc{T}}}=\hat{\mu}_t$. Note that there can be many $\mc{T}$ yielding the same $\mu_t^{*,\mc{T}} = \hat{\mu}_t$. An economically meaningful way to design this SRT is to tax any deviation from the desired equilibrium matching $\hat{\mu}_t$ proportionally to the amount of systemic risk that it generates\footnote{This was studied with an agent-based model in \cite{poledna2014elimination}.}. The desired equilibrium itself remains untaxed.  Thus $\forall j \in \mc{B}_t$, set $\mc{T}_{\hat{\mu}_t(j),j}=0$ and set 
\begin{equation}
\mc{T}_{ij} = r_{\hat{\mu}_t(j),j} - r_{ij} + \epsilon + \zeta max(0,\Delta ESL(ij))
\end{equation}
where $\epsilon > 0$ and $\zeta$ is some scaling parameter. This has the effect of re-ordering a borrower's preferences in decreasing order of their contribution to systemic risk. Thus $r_{ij}^{\mc{T}}$ now reorders the preferences of the borrowers with the desired match on top and taxes the other matches proportionally to the risk they create.

The minimization problem in Eqs. (\ref{eq:minProb})-(\ref{eq:minProbConstr}) has a solution\footnote{Note however, that for large systems, this problem may pose computational difficulties. In such cases, approximations schemes could be used to find an approximate solution. Such a scheme consists in separating $\mc{B}_t$ into smaller subsets and then sequentially matching those subsets to $\mc{L}_t$. This is beyond the scope of this article.}, since it is just a combinatorial optimization problem over a finite set of possible matchings between the finite sets $\mc{L}_t$ and $\mc{B}_t$. To provide an illustration, we now solve this problem on a dynamically evolving complex network. 
Figure \ref{Fig1} shows the evolution of the expected systemic loss and the cumulative transaction volume in $3$ different scenarios: (i) without tax; (ii) with a Tobin-like tax; and (iii) with the SRT. In the latter case, the regulator finds the SRT $\hat{\mc{T}}$ by solving the optimization problem in Eqs. (\ref{eq:minProb})-(\ref{eq:minProbConstr}) at each time $t$. The interbank system in this example is composed of $|\mc{N}|=10$ banks.

In the upper panel, we see that a Tobin-like tax (blue curve) only has a limited effect on reducing the expected systemic loss. Moreover, in the lower panel we see that this comes at the cost of a reduction in trading volume, i.e. the number of loans extended. The SRT (green curve), on the other hand, allows to reconfigure the network of exposures in a systemic risk efficient way. It therefore does not reduce transaction volume while drastically reducing the expected systemic loss. Here the constraint is set to $\nu=Vol(\mu^*_t)$, i.e. the regulator finds the systemic risk-efficient equilibrium matching $\mu^{*,\hat{\mc{T}}}_t$ that achieves the same transaction volume as the untaxed, systemic risk-inefficient equilibrium matching $\mu^*_t$. A Tobin-like tax does not enable the regulator to achieve this. The Tobin-like tax $\kappa$ simply prevents some transactions from taking place by making them too expensive for certain borrowers (i.e. $r_{ij}^{\kappa}>\bar{r}_j$ for certain $j \in \mc{B}_t$).

With a Tobin-like tax, the reduction in systemic risk is due to the reduction in the number of loan exposures. On the other hand, with a SRT it is due to a more efficient allocation of those loan exposures across counter-parties. This means that the SRT incentivizes a specific matching of counter-parties that minimizes the expected systemic loss.

\begin{figure*}
  \centerline{
\includegraphics[scale=0.65]{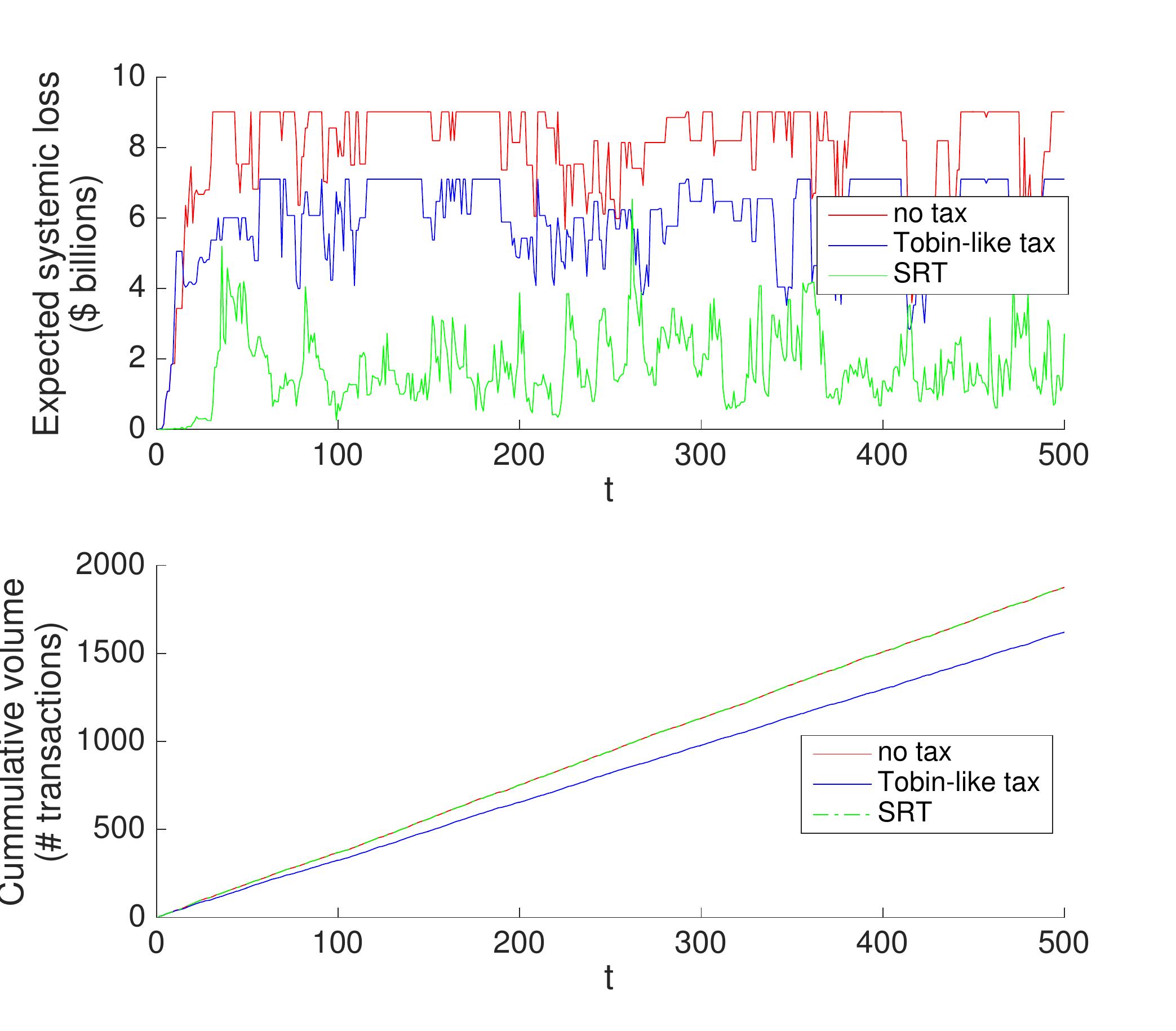}
  }
\caption{Evolution of Expected Systemic Loss: (i) without Tax (red); (ii) with a Tobin-like Tax (blue) and (iii) with a Systemic Risk Tax (SRT) (green). \textit{At each time $t$, the regulator finds the SRT $\hat{\mc{T}}$ by solving the optimization problem in Eqs. (\ref{eq:minProb})-(\ref{eq:minProbConstr}). The target volume $\nu$ is set to $Vol(\mu^*_t)$, the volume achieved by the un-taxed equilibrium matching. Top panel shows the expected systemic loss supposing a default event (i.e. conditioning on $t=T$). Bottom panel shows the cumulative transaction volume (the cumulative number of loans extended over time). Interbank system has $|\mc{N}|=10$ banks. There are $500$ time steps and each loan has a maturity of $S=30$ periods and a value of $1$ billion dollars. Model parameters are: $y=1$, $Z=0.5$, $Y^i_0 \sim U(0.5,2.5)$ (all values in billions of dollars), $r_i \sim U(0,8\%)$ and $\gamma^i \sim  U(0,0.09\%)$ are chosen randomly at time $t=0$. $\bar{r}_i = 9\%$. The Tobin-like tax is set to $\kappa = 3\%$. For simplicity, this simulation assumes that banks form the belief $\rho_{t,S}^j=\bar{\rho}_S^j$. This case is discussed in Section \ref{sec:lim_info}.}}
  \label{Fig1} 
\end{figure*}

\section{Conclusion}
\label{sec:conclusion}
In this article, we have shown analytically, without any recourse to computational or simulation techniques that a regulator (e.g. a Central Bank) possessing information about the topology of a financial network of assets and liabilities can design a \em transaction-specific \em tax that incentivizes institutions (e.g. banks) to create a network more resilient to insolvency cascades. This transaction-specific \em systemic risk tax \em (SRT), allows a regulator to select a unique equilibrium network configuration that minimizes systemic risk given a target transaction volume. Without this SRT, many equilibrium networks can arise and they are generally inefficient, i.e. they may present higher systemic risk. We also showed  analytically that a standard financial transaction tax (FTT) (e.g. a Tobin-like tax) reduces transaction volume while having only a marginal effect on reducing systemic risk. Indeed, a Tobin-like tax fails to account for the fact that different transactions have different impacts on creating systemic risk because they involve institutions of different systemic importance.

We note a number of differences between the current article and \cite{poledna2014elimination}, in which the concept of a SRT was introduced. 
In that paper, the validity of the SRT was shown with the help of an agent-based model (CRISIS macro-financial model), that is -- as in all such models -- based on a number of assumptions and parameters that may be hard to verify. The main contribution of this present work is showing  that the SRT works independently of the numerical model chosen, and that an equilibrium exists 
under this tax and is drastically different from untaxed equilibria in terms of the overall systemic risk levels. 
The analytical framework here allows us to prove several additional interesting properties of the SRT independently of numerical validation. 
Namely, the multiplicity of equilibrium matchings and the fact that the SRT can select a unique and efficient matching. It also provides the intuition that a Tobin tax reduces transaction volume because it prevents certain matchings from being formed, while the SRT has a nice interpretation as allowing the same volume to be exchanged under a different matching configuration.
Further, \cite{poledna2014elimination} covers the case where borrowers sequentially (one at a time) choose one out of many lenders. This can be seen as a special case of the current model, which examines multiple borrowers and multiple lenders simultaneously in a two-sided market.

While we illustrated these results with the help of a simple interbank network formation setup, it is important to emphasize that the concept of a SRT applies much more generally to any credit market. Indeed, we made minimal assumptions about the reasons that push different institutions to trade with one another. The concept of a SRT is based on the idea that the preferences of counter-parties for one another can be changed arbitrarily, thus leading to any desired equilibrium matching between counter-parties. This systemic risk tax was extensively simulated with an agent-based model (CRISIS macro-financial model, see \cite{poledna2014elimination}) and was shown to perform very well under a wide range of different conditions and parameters. The concept of a systemic risk tax was also applied to a more complex interbank system involving derivative contracts (\cite{leduc2016CDS}) and was shown to perform very well. It may be applied to other types of networked systems as well and this is left for future research. 

\newpage

\section{Appendix}
\label{sec:appendix}

\subsection{Limited Information}
\label{sec:lim_info}
So far we have assumed that at any time $t$, all information about the system was common knowledge. In other words, we assumed that the topology of the financial system, i.e. $\bar{A}_{t-1}$ and $\vec{E}_{t-1}$, were known to all banks. In reality, while $\vec{E}_{t-1}$ may be inferred from publicly available balance sheet information, the whole topology of the interbank system (i.e. $\bar{A}_{t-1}$) may not be available to all banks. There are two ways of dealing with this. One is to assume some common prior $\bar{q}$ on the contagion risk. The total probability of failure of some borrower $j$ would then be $\rho_{t,S}^j = \bar{\rho}_S^j + (1-\bar{\rho}_S^j)\bar{q}$. We may also assume that banks literally ignore contagion risk, in which case their belief about the total probability of failure of some borrower $j$ would simply be $\rho_{t,S}^j=\bar{\rho}_S^j$. This does not affect the results derived throughout the paper although in the second case we can predict higher trading volume. Indeed, the contagion risk not being taken into account when a lending bank sets the risk premium, this necessarily results in lower risk premia and thus higher demand for loans (fewer banks' reservation rates being exceeded).

\subsection{Risk Management Strategies with Strict Preferences for Borrowers}
\label{sec:other_risk_mang_strat}

In this section, we show that our results extend naturally to the case where lenders have \textit{strict} preferences over borrowers. Here we allow lenders to manage their risk by favoring borrowers with the lowest credit risk. We let the expected payoff of a lender $i \in \mc{L}_t$ that lends to a borrower $j \in \mc{B}_t$ be 
\begin{equation}
\Pi_{\lambda}^i(j) = \frac{1}{(1+r_i)^S} (1-\rho_{t,S}^j) (1 + r_i)^S-1.
\end{equation}
Here, the lender does not charge a risk premium $h_{ij}$ to hedge the credit risk of the borrower. This payoff is thus strictly decreasing in $\rho^j_{t,S}$, the borrower's default probability. It follows that lender $i$ would prefer lending to the safest bank, i.e. the bank $j$ with the lowest probability of default $\rho^j_{t,S}$ on a loan. Thus, when the $\rho^j_{t,S}$ can be strictly ordered, lender $i$ has a strictly ordered list of preferences, $P^i_{\lambda}$, on the set of potential borrowers $\mc{B}_t$. Lender $i$'s preferences are of the form $P^i_{\lambda}=d, e, f, ...$,  indicating that its first choice is to lend to borrower $d$, its second choice is to lend to borrower $e$, etc., where borrowers are ordered according to their probability of default, $\rho^d_{t,S} < \rho^e_{t,S} < \rho^f_{t,S} < ...$, on the loans they request.

The next result states that under such a regime, there  always exists a \textit{unique} equilibrium.

\begin{proposition}[Equilibrium Uniqueness with Strict Preferences]
Given any market for liquidity $(\mc{B}_t,\mc{L}_t, \textbf{P})$ where the preferences are strict, there exists a unique stable matching $\mu_t^*$ at time $t$. Moreover, the amount of liquidity exchanged at time $t$ is bounded as follows: $Vol(\mu_t^*) \leq min(|\mc{B}_t|,|\mc{L}_t|)$.

\label{prop:strict_pref_existence}
\end{proposition}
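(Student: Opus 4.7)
The plan is to exploit the fact that, in this risk-management regime, preferences on both sides of the market are \emph{common} and strict. Every lender ranks borrowers in increasing order of $\rho^j_{t,S}$, because $\Pi^i_\lambda(j)$ depends on $j$ only through this probability; and every borrower ranks lenders in increasing order of the base rate $r_i$, because the absence of a risk premium $h_{ij}$ makes $r_{ij}=r_i$ for all $j$. A two-sided matching market with common strict preferences on both sides is precisely the case in which the lattice of stable matchings collapses to a single element, so existence and uniqueness should both follow from a single greedy construction.

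First I would construct a candidate $\mu^*_t$ by serial assortative matching: list lenders in $\mc{L}_t$ in increasing $r_i$ and borrowers in $\mc{B}_t$ in increasing $\rho^j_{t,S}$, and pair the $k$-th lender with the $k$-th borrower for $k=1,\ldots,\min(|\mc{L}_t|,|\mc{B}_t|)$, provided the offered rate $r_i$ does not exceed $\bar r_j$; otherwise leave both sides unmatched at that step. Next I would verify that $\mu^*_t$ satisfies the three conditions of Definition~\ref{def:stableMatching}. Pairwise deviation (I) is impossible: if $i$ and $j$ strictly preferred each other to $\mu^*_t(j)$ and $\mu^*_t(i)$ respectively, then $i$ would precede $\mu^*_t(j)$ in the common borrower ordering \textit{and} $j$ would precede $\mu^*_t(i)$ in the common lender ordering, so the construction would have matched them before reaching the later pair. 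Coalitional deviation (II) is vacuous here, since lender preferences are now strict and no indifference class of size greater than one exists. Unilateral deviation (III) is ruled out by the reservation-rate cutoff built into the construction.

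For uniqueness, I would argue by contradiction. Suppose $\mu'\neq\mu^*_t$ is a second stable matching, and let $j^\dagger$ be the borrower highest in the common ordering at which the two matchings differ. By minimality of $j^\dagger$, every borrower strictly preferred to $j^\dagger$ is assigned the same lender under $\mu'$ and $\mu^*_t$; hence the lender $i^\dagger=\mu^*_t(j^\dagger)$, being the cheapest lender still available to $j^\dagger$ in the greedy construction, must also be unmatched with any more-preferred borrower under $\mu'$. But then $j^\dagger$ strictly prefers $i^\dagger$ to $\mu'(j^\dagger)$, and $i^\dagger$ strictly prefers $j^\dagger$ to $\mu'(i^\dagger)$ (the latter being either $i^\dagger$ itself or a borrower with higher default probability), so $(i^\dagger,j^\dagger)$ is a blocking pair for $\mu'$ in the sense of condition (I), contradicting stability.

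The volume bound $Vol(\mu^*_t)\le \min(|\mc{B}_t|,|\mc{L}_t|)$ then follows immediately from the one-to-one nature of a matching, since at most $\min(|\mc{B}_t|,|\mc{L}_t|)$ lender–borrower pairs can be formed. The main obstacle I anticipate is the bookkeeping around the reservation rates: I need to argue that if the greedy construction skips a pair due to $r_i>\bar r_j$, no alternative stable matching can re-introduce that pair or compensate by shuffling surrounding matches without creating a blocking pair via the common preference structure. This is where the alignment of the two rankings really does the work, and it is the part of the argument that I expect to require the most careful case analysis.
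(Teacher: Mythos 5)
Your construction is not correct, and the reservation‑rate issue you flag at the end is exactly where it breaks. You pre-pair the $k$-th cheapest lender $i_k$ with the $k$-th safest borrower $j_k$ and, when $r_{i_k}>\bar r_{j_k}$, you ``leave both sides unmatched at that step.'' But although borrower $j_k$ is then permanently unmatchable (every remaining lender is still more expensive than $i_k$), lender $i_k$ is not: a less safe borrower with a \emph{higher} reservation rate may still strictly prefer $i_k$ over any lender it could otherwise get. Concretely, take $\mc{L}_t=\{i_1,i_2\}$ with $r_{i_1}<r_{i_2}$, $\mc{B}_t=\{j_1,j_2\}$ with $\rho^{j_1}_{t,S}<\rho^{j_2}_{t,S}$, and $\bar r_{j_1}<r_{i_1}<r_{i_2}<\bar r_{j_2}$. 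Your algorithm skips $(i_1,j_1)$ and outputs $\mu=\{(i_2,j_2)\}$, but then $i_1$ is unmatched, $r_{i_1}<r_{i_2}<\bar r_{j_2}$, and $j_2$ strictly prefers the unmatched $i_1$ to $i_2$, violating condition~(III) of Definition~\ref{def:stableMatching}. The actual stable matching is $\{(i_1,j_2)\}$.

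The fix is not purely cosmetic: the pairing must be \emph{adaptive}, not fixed in advance. The paper's construction is a serial dictatorship with lender priority: process lenders in increasing order of $r_i$, and at each step let the current lender $i_k$ pick the \emph{remaining} borrower with lowest $\rho^j_{t,S}$ among those still willing to borrow from it (i.e.\ those with $\bar r_j>r_{i_k}$). This is genuinely different from rank-$k$-to-rank-$k$ pairing whenever reservation rates differ across borrowers, and that adaptivity is precisely what guarantees condition~(III). Your uniqueness-by-blocking-pair argument (take the first borrower in the common ordering where the two matchings differ, exhibit a pairwise block) is in the same spirit as the paper's and works once the construction is repaired, though you should also handle the case where that borrower is unmatched in $\mu^*_t$ because its reservation rate excludes every lender. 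The volume bound is unaffected and is correctly argued.
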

This equilibrium is generally systemic risk inefficient, as will be shown later in Proposition \ref{prop:ESL_strict_prefs}. This case is similar to the classical setting of \cite{GaleShapley1962}, where each side of the market has strictly ordered preferences. In our case, however, uniqueness follows from homogenous preferences on both sides of the market. 

With strict preferences on both sides of the market, a stable matching can naturally emerge from a process of repeated negotiations. Each borrower solicits lenders in decreasing order of their preferences, i.e. they first solicit the lender who offers the lowest interest rate and so on. The solicited lender gives provisory approval if the borrower is safer (i.e. has a smaller default probability) than the ones who have solicited him earlier. This process leads, in a finite number of iterations, to the unique stable matching $\mu_t^*$.

This uniqueness result also applies to the case of a Tobin-like tax $\kappa$, since the latter does not affect the ordering of preferences\footnote{It only makes certain borrowing rates too high for some borrowers (i.e. $r^{\kappa}_{i}>\bar{r}_j$, for some $i\in\mc{L}_t$ and some $j\in\mc{B}_t$).}. By an argument similar to that of Theorem \ref{th:SRT_unique_match}, it also follows that under an appropriately chosen SRT $\mc{T}$, any feasible matching such that $r_{i} < \bar{r}_j$ can be sustained as a unique equilibrium. 

The next proposition is analogous to Proposition \ref{th:ESL}, however for the strict preferences setting.

\begin{proposition}[Systemic Risk under Systemic Risk Tax with Strict Preferences]
Let $(\mc{B}_t,\mc{L}_t, \textbf{P})$ be a market for liquidity at time $t$. Given a net exposure matrix $\bar{A}_{t-1}$ at time $t-1$, let $\bar{A}^{*,\mc{T}}_{t}$, $\bar{A}^{*,\kappa}_{t}$ and $\bar{A}^*_{t}$ be the net exposure matrices formed at time $t$ with a systemic risk transaction tax matrix $\mc{T}$, with a Tobin-like tax $\kappa$, and without tax by the unique equilibrium matchings $\mu^{*,\mc{T}}_t$, $\mu^{*,\kappa}_t$ and $\mu^{*}_t$, respectively. Then,

\item (i) If $Vol(\mu^*_t) = \nu$, there exists $\mc{T}$ such that $ESL(\bar{A}^{*,\mc{T}}_{t},\vec{E}_{t}) \leq ESL(\bar{A}^*_{t},\vec{E}_{t})$ and $Vol(\mu^{*,\mc{T}}_t) \geq Vol(\mu^*_t)$; In particular, there exists $\mc{T}$  such that $\mu^{*,\mc{T}}_t$ is systemic risk efficient.
 
\item (ii) If $Vol(\mu^{*,\kappa}_t) = \nu$, there exists $\mc{T}$ such that $ESL(\bar{A}^{*,\mc{T}}_{t},\vec{E}_{t}) \leq ESL(\bar{A}^{*,k}_{t},\vec{E}_{t})$ and $Vol(\mu^{*,\mc{T}}_t) \geq Vol(\mu^{*,k}_t) $.
 
\label{prop:ESL_strict_prefs}
\end{proposition}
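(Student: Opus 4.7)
The plan is to leverage the remark made just before the proposition statement: under an appropriately chosen SRT $\mc{T}$, \emph{any} matching $\mu_t$ with $r_{ij} < \bar{r}_j$ for every $\mu_t(i) = j$ can be sustained as the unique stable equilibrium in the strict-preferences setting, by an argument completely analogous to Theorem \ref{th:SRT_unique_match}. Once this uniqueness/selection lemma is in hand, both parts of the proposition reduce to exhibiting, in the set of reservation-feasible matchings, a specific element with the desired volume and ESL properties, and then invoking the selection lemma to conclude that an SRT realizes it.

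For part (i), I would first define the feasible set $\overline{\mc{EQ}}_t = \{ \mu_t : r_{ij} < \bar{r}_j \text{ for all } \mu_t(i) = j \}$ and restrict to its subset $\overline{\mc{EQ}}_t(\nu) = \{ \mu_t \in \overline{\mc{EQ}}_t : Vol(\mu_t) \geq \nu \}$. Since the untaxed equilibrium $\mu^*_t$ satisfies the reservation-rate constraints and has $Vol(\mu^*_t) = \nu$, we have $\mu^*_t \in \overline{\mc{EQ}}_t(\nu)$, so the set is nonempty; it is also finite (only finitely many matchings between finite sets). Thus the minimizer
\[
\mu^{*,\mc{T}}_t \in \underset{\mu_t \in \overline{\mc{EQ}}_t(\nu)}{\text{argmin}} \; ESL(\bar{A}_t(\mu_t), \vec{E}_t)
\]
exists, where $\bar{A}_t(\mu_t)$ denotes the net-exposure matrix induced from $\bar{A}'_{t-1}$ by $\mu_t$. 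By construction $Vol(\mu^{*,\mc{T}}_t) \geq \nu = Vol(\mu^*_t)$ and $ESL(\bar{A}^{*,\mc{T}}_t, \vec{E}_t) \leq ESL(\bar{A}^*_t, \vec{E}_t)$ since $\mu^*_t$ is itself an element of $\overline{\mc{EQ}}_t(\nu)$. Finally, by the selection lemma there is some $\mc{T}$ that sustains $\mu^{*,\mc{T}}_t$ as the unique stable matching; restricting the minimization to $\overline{\mc{EQ}}_t(\nu)$ with the equality $Vol = \nu$ (rather than $\geq \nu$) yields in particular a systemic-risk-efficient equilibrium.

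Part (ii) proceeds identically, simply replacing $\mu^*_t$ with $\mu^{*,\kappa}_t$ as the witness of non-emptiness. The only point requiring care is that $\mu^{*,\kappa}_t$ also lies in $\overline{\mc{EQ}}_t$, i.e.\ satisfies $r_{ij} < \bar{r}_j$ (rather than merely $r^\kappa_{ij} < \bar{r}_j$); but since $\tau_{ij} \geq 0$ and $\kappa \geq 0$, the inequality $r^\kappa_{ij} < \bar{r}_j$ strictly implies $r_{ij} < \bar{r}_j$, so membership is immediate. The main (mild) obstacle I anticipate is justifying the selection lemma in the strict-preferences regime: the paper asserts it ``by an argument similar to that of Theorem \ref{th:SRT_unique_match},'' so I would sketch why the SRT can still reorder borrowers' preferences to place any desired lender on top, and, symmetrically, note that since the lenders' payoffs $\Pi^i_\lambda(j) = (1-\rho^j_{t,S}) - 1$ are unaffected by $\tau_{ij}$ (the tax is collected by the regulator), the lenders' rankings over borrowers can in principle be perturbed only through the regulator's quotes; one argues that by adding a sufficiently large tax on any deviation, both sides find the desired match mutually acceptable and no pairwise, coalitional, or unilateral deviation is profitable, yielding uniqueness. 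Once this is secured, the argument above closes.
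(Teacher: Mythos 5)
Your overall route is the same as the paper's: first extend Theorem~\ref{th:SRT_unique_match}'s selection result to the strict-preferences regime, then run the finite argmin argument of Proposition~\ref{th:ESL} over the feasible set $\overline{\mc{EQ}}_t$, using $r^{\kappa}_{ij} < \bar{r}_j \Rightarrow r_{ij} < \bar{r}_j$ to place $\mu^{*,\kappa}_t$ in $\overline{\mc{EQ}}_t$ for part~(ii). That backbone is correct and is exactly what the paper does.

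The genuine gap is in your sketch of the selection lemma, specifically \emph{uniqueness}. Showing that ``no pairwise, coalitional, or unilateral deviation is profitable'' from the target matching $\hat\mu_t$ only establishes that $\hat\mu_t$ is stable; it does not rule out the existence of a \emph{second} stable matching, and in generic two-sided markets with strict preferences on both sides multiple stable matchings typically coexist (the proposer-optimal and receiver-optimal outcomes of Gale--Shapley). What delivers uniqueness here is not the tax size but the \emph{homogeneity of the lenders' preferences}: every lender ranks borrowers by the same quantity $\rho^j_{t,S}$, and the tax puts $\hat\mu_t(j)$ at the top of each borrower $j$'s list. The paper makes this explicit with an iterative argument: the lowest-default-probability borrower $j_1$ is every lender's top choice and has $i_1=\hat\mu_t(j_1)$ as its own top choice, so $(i_1,j_1)$ must appear in every stable matching; conditional on that, $(i_2,j_2)$ is forced; and so on down the default-probability ordering. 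Equivalently, any competing stable matching would generate a strictly descending chain of default probabilities, impossible in a finite set. You need this lender-side homogeneity argument explicitly; ``sufficiently large tax on deviations'' touches only the borrower side. As a small wording issue, note also that the lenders' rankings cannot be perturbed by quotes \emph{at all} (quotes are rates the borrower pays; lenders' payoffs are tax-invariant), so one should simply say lenders' preferences are fixed, rather than ``can in principle be perturbed only through the regulator's quotes.''
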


\subsection{Variable Loan Size}
\label{sec:VariableLoanSize}
We have assumed that a bank whose household clients demand a loan must obtain that money from the interbank market (from another bank whose household clients have just deposited money). The size of an interbank loan is thus exogenous and depends on the amount that a bank must borrow from the interbank market in order to fulfill the borrowing needs of the household clients. The amount of the loan is thus entirely governed by the exogenous household shock. In Section \ref{sec:liquidity_shocks}, we assumed this amount was $1$, allowing for a simple analysis of the matching process at every time time $t$. Since the network is formed of many overlapping matchings, the net exposure $\bar{A}_t^{ij}$ between two banks can have a variable (integer) size. The net exposure matrix $\bar{A}_t$ at any $t$ can thus be seen as a weighted graph.

\subsubsection{A More Complex Model with Variable Loan Sizes}
More complex matching models could be used to account for variable loan sizes. For example, \cite{baiou2002} consider a stable allocation problem, which can be used to study a set of employees each having a certain number of available hours to work and a set of employers each seeking a certain number of hours of work. Such an allocation problem could be applied to our context of lenders and borrowers who try to exchange different amounts of liquidity. The main moral hazard problem present in an interbank system would however remain: a bank does not consider the systemic risk imposed on other banks when it lends/borrows. Such more realistic matching models would mainly complicate the analysis and are thus left as future work.


\subsubsection{A Simple Extension Accounting for Variable Loan Sizes}

The current matching model could be extended to variable loan sizes in the following way:  

At time $t$, let each bank $i \in \mc{N}$ receive a liquidity shock $\tilde{\epsilon}_t^i = \epsilon_t^i \gamma_t^i$, where $\gamma_t^i$ is exponentially distributed and governs the magnitude of the liquidity shock, while $\epsilon_t^i$ is as in Section \ref{sec:liquidity_shocks} and governs whether a bank is a lender or a borrower for that period. Then the maximum amount of liquidity that can be exchanged between two trading partners $i$ and $j$ in a matching where lender $i$ lends to borrower $j$ is $\min(\tilde{\epsilon}^i_{t,k},|\tilde{\epsilon}^i_{t,k}|)$. 

Thus it is clear that a single round of matching between lenders and borrowers will not allow for the exchange of all the liquidity available. We can therefore allow for multiple matching ``rounds'' within a time period $t$. 

Then, for any round $k$, the supply and demand of each lender $i$ and borrower $j$ can be be updated as follows (assuming there is a match between $i$ and $j$ in round $k$) 

$$\tilde{\epsilon}^i_{t,k+1} = \tilde{\epsilon}^i_{t,k} -   \min(\tilde{\epsilon}^i_{t,k},|\tilde{\epsilon}^j_{t,k}|)$$

$$\tilde{\epsilon}^j_{t,k+1} = \tilde{\epsilon}^j_{t,k} +   \min(\tilde{\epsilon}^i_{t,k},|\tilde{\epsilon}^j_{t,k}|)$$

Since in any matching, one counter-party either lends all its current supply or borrows all its current demand, it is removed from the set of   lenders or borrowers for the next round.
It follows that $\mc{B}_{t,k+1} \subseteq \mc{B}_{t,k}$ and $\mc{L}_{t,k+1} \subseteq \mc{L}_{t,k} $, with one inclusion being strict when at least one stable matching exists in round $k$. Thus, we can match the updated sets of lenders and borrowers until the final round in which either one set becomes empty or no stable matching exists (e.g. because the remaining lenders offer rates that exceed the remaining borrowers' reservation rates). This process terminates in a finite number of rounds. Note  that in this multi-round model, a matching is only stable in a single round (agents are myopic and do not consider the subsequent rounds).

The result is a bipartite graph formed in period $t$, but this bipartite graph is different from the one in Fig. \ref{fig:matching} in two ways: (i) each borrower or lender can potentially have multiple edges incident on it; and (ii) the edges are weighted by the amounts of liquidity exchanged in each round (a real number).

This model however adds little to the analysis. Indeed, in the original model, the overlapping matchings created over time create net exposures of variable (integer) sizes between any two banks anyway. For example, the interbank network in Fig. \ref{fig:IB_network} has the net exposure matrix in Eq. (\ref{eq:A_bar_matrix}). In this matrix, bank $2$ has loaned twice to bank $1$ in the past and thus has a net exposure of size $2$. From the perspective of systemic risk, this can be seen as a loan of size $2$.

\subsection{Trading with Multiple Partners in a Single Period}
The network formation model is addressed as a bipartite matching at every time period. Indeed, household loans must be serviced by interbank loans and a bank can only use current household deposits to extend an interbank loan. This can be justified by assuming that a bank's liquidity shock $\epsilon_t^i$ is really the sum of all household deposits \emph{minus} the sum of all loans demanded by households. A bank then uses its household deposits to extend household loans. If the difference is positive, then the bank has a supply of liquidity ($\epsilon_t^i > 0$) that it can lend on the interbank market. If the difference is negative ($\epsilon_t^i<0$), then the bank must borrow on the interbank market to extend the remaining household loans. 

\subsubsection{More Complex Models}
In reality, however, the reasons that push banks to borrow/lend on the interbank market are more complex. Banks may borrow and lend in the same period. Under a good rate, a bank may also prefer to borrow than to lend, even if it has a supply of liquidity. To model such more realsitic interbank borrowing/lending decisions, more complex matching technologies such as the one studied in \cite{hatfield2013stability} or the multi-sided matching market studied in \cite{fleiner2016trading} could be used. Such matching models may be more realistic, but  might also obscure the analysis that we are interested in making, i.e. focusing on the effect of lending on systemic risk. In any event, the moral hazard problem would remain: a bank does not consider the systemic risk imposed on other banks when it lends/borrows. The idea of a SRT used to redress such inefficiencies could then also be studied. Using more general matching models to study the interbank market can thus be left as future work.

\subsubsection{Risk Diversification}

The basic model we study only allows a lending bank to offset counter-party  risk by setting a risk premium that is increasing in the default risk of the borrower. In Section \ref{sec:other_risk_mang_strat}, we also considered a different risk management strategy in which the lending bank has strict preferences and tries to lend to the safest borrower.

A more sophisticated risk management strategy that could be used by a bank is to diversify the credit risk of borrowers. One way to model this would be to use a matching process like that of \cite{alkan2003stable}, where each lender could have a choice function $C$ which, given the set of borrowers on the other side of the market, chooses the most preferred subset $S=C(\mc{B}_t)$ contained in $\mc{B}_t$. $S$ is then said to be \emph{revealed preferred} to all other subsets of $\mc{B}_t$. A lender could then choose to lend smaller amounts to a given number of borrowers, thereby diversifying its credit risk. This would lead to a denser network, although with smaller weights. On the other hand, for the same reasons as previously stated, inefficiencies and the moral hazard would still be present is such a model and it could be interesting to study how a SRT could be applied in this particular context, although it would operate according to the same idea. This may be explored in future research.

\subsection{Variable Time to Maturity}
\label{sec:VariableTimeToMaturity}
So far, we have assumed a fixed time to maturity $S$ for all loans. We can explain this again by saying that loan maturities (just like sizes) are governed by the needs of the borrowing households. Here we fix that to $S$. When a bank's household clients want to borrow more than they  deposit, the bank must use the interbank market to service this net household demand and secure a loan for $S$ periods. Since on the other side of the market, depositing households are assumed to also deposit money for $S$ periods, the two flows are matched in terms of maturity. This assumption simplifies the accounting. We fix $S$ mainly because we wish to abstract away from the  individual decisions that go into determining the maturity of a loan. This depends on things that we are not interested in modeling and which would obscure the analysis. In a more complex model using a more general matching technology, a lender could then have preferences over maturities (for example, preferring shorter maturities), but this is not explored in the current article. Moreover, it is not clear how what drives preferences for maturities would be directly relevant to studying systemic risk in this context. 

However, note that we could assign a random maturity to every loan (after the matching has been formed, presuming that maturity results from negotiations taking place after the two parties have agreed to trade). This would not change the equilibrium behavior, but would allow each loan to have a random (e.g. Poisson distributed) maturity. This would however add little to the analysis since at any given time, the interbank network created by the current model is a network of overlapping loans of various times-to-maturity (since the loans were created at different times in the past).

\subsection{Additional Numerical Results}

We now present additional network statistics for the interbank system simulated in Section \ref{sec:Reg_opt_prob}. 

\subsubsection{Degree Distribution}
Figure \ref{fig:in_out_deg} shows the empirical in- and out-degree distributions of the interbank network. The in-degree is the number of counter-parties from which a bank has borrowed money (irrespectively of the size of the exposure) and the out-degree is the number of counter-parties to which a bank has lent money (irrespectively of the size of the exposure)\footnote{This however takes the bilateral netting of exposures into account in the sense that if two banks have loaned an equivalent amount of money to each other, then the exposure is cancelled.}. We see that, although the number of counter-parties under a SRT appears to decrease on average, the distributions in the three scenarios are not strikingly different. The number of counter-parties is thus not a useful measure to understand how a SRT reshapes the interbank network.

\begin{figure*}[h]
  \centerline{
\includegraphics[scale=0.45]{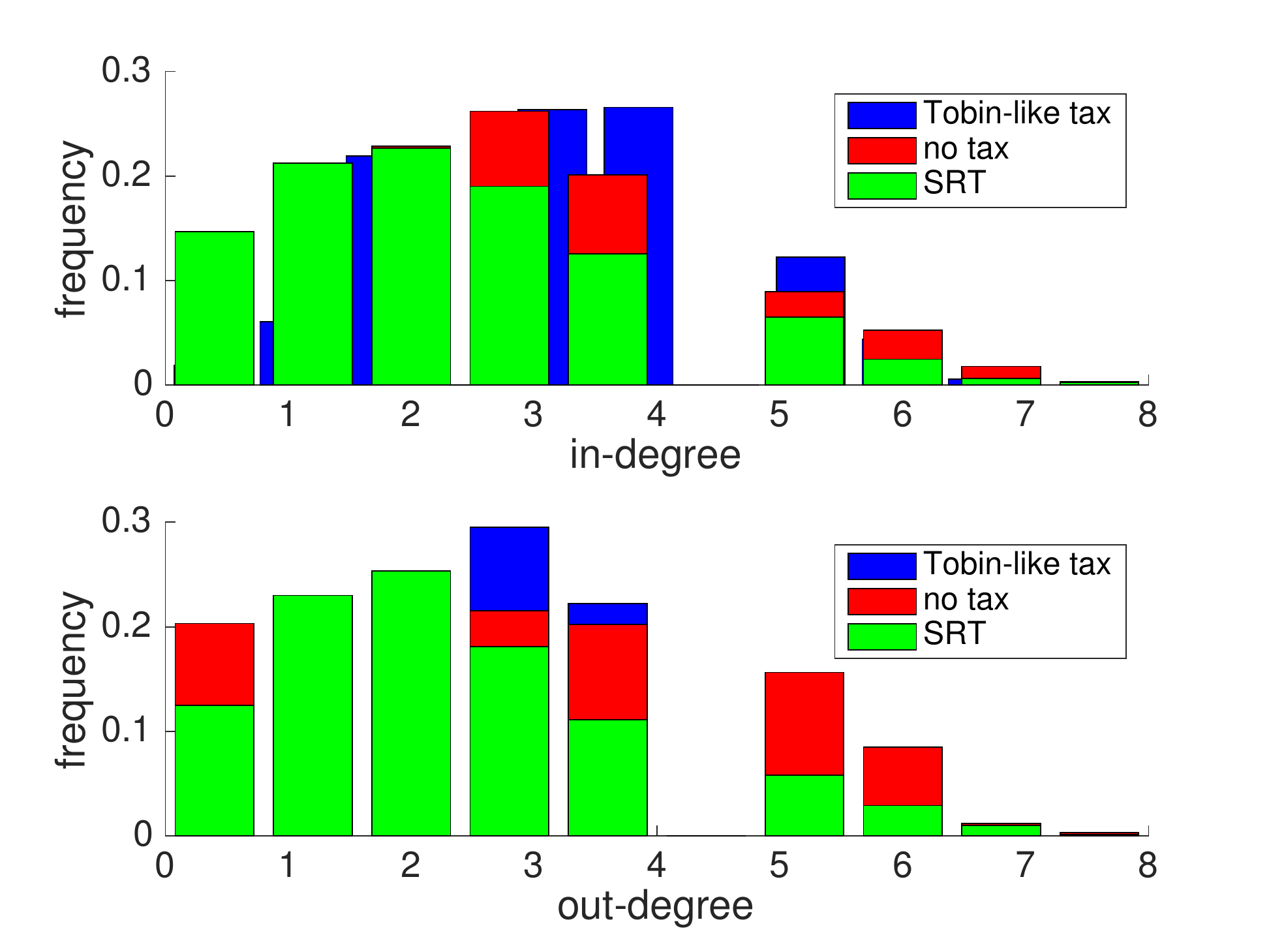}
  }
  \caption{Empirical distributions of in-degrees and out-degrees in the unweighted adjacency matrix of the interbank network simulated in Section \ref{sec:Reg_opt_prob}. \textit{The in-degree represents the number of counter-parties from which a bank has borrowed money. The out-degree represents the number of counter-parties to which a bank has lent money.}}
  \label{fig:in_out_deg} 
\end{figure*}

\subsubsection{Systemic Impact $SI^i$}
To understand what properties of the network change under the three different scenarios, we must look at different statistics. The most relevant statistic is $SI^i$, the systemic impact of a bank, as defined in Definition \ref{def:SI}. This is truly a centrality measure since it measures how many other banks are affected by the bankruptcy of a bank $i$. The empirical distribution of $SI^i$ is shown in Fig. \ref{fig:SI_emp_dist}. We clearly see that the SRT drastically shifts the distribution of systemic impacts towards lower values. A Tobin-like tax on the other hand, only has a marginal impact on it.

\begin{figure*}
  \centerline{
\includegraphics[scale=0.45]{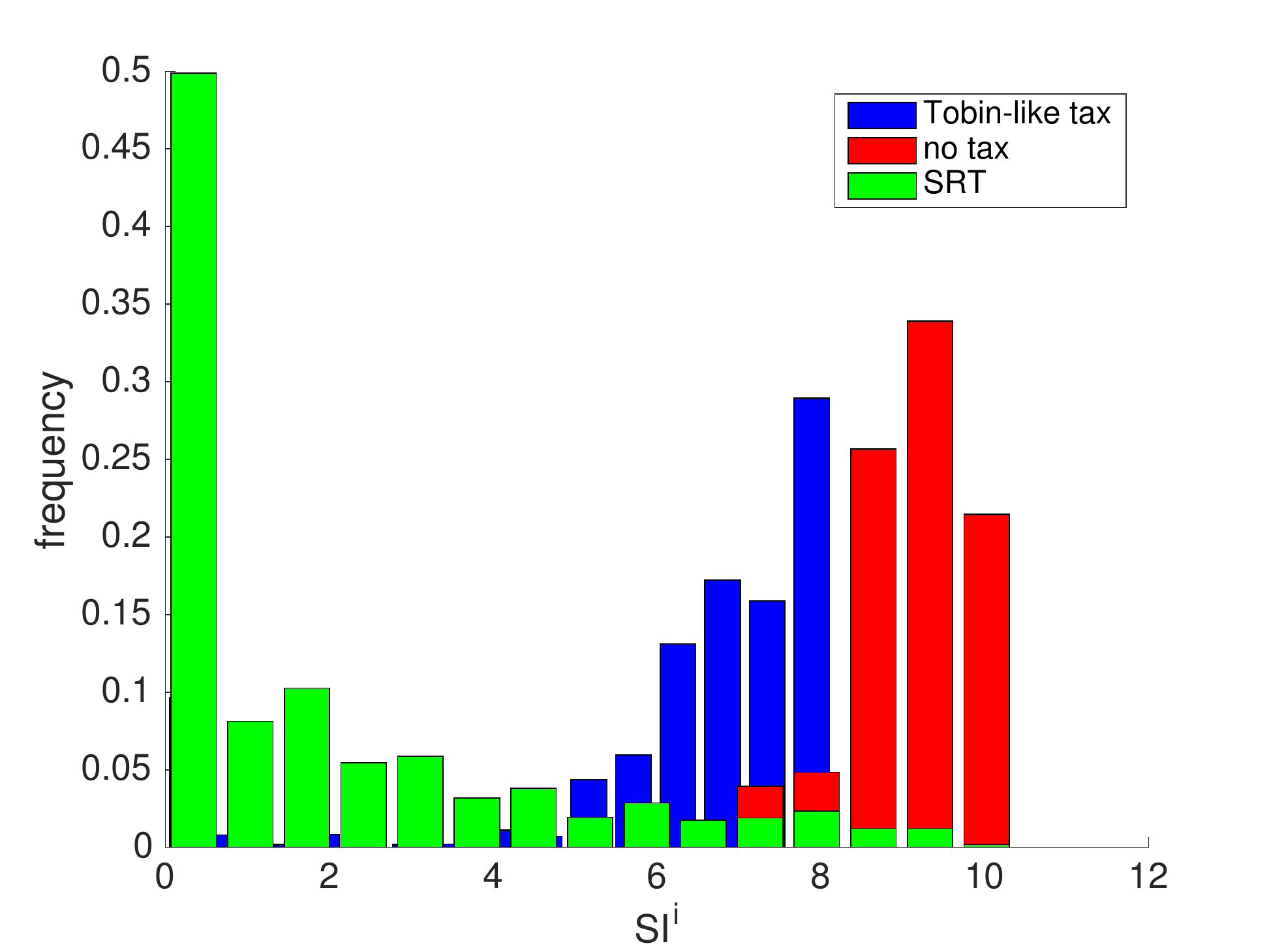}
  }
  \caption{Empirical distributions of the systemic impact $SI^i$ of a bank in the interbank network simulated in Section \ref{sec:Reg_opt_prob}.}
  \label{fig:SI_emp_dist} 
\end{figure*}

\subsubsection{Clustering Coefficient and Spectral Radius}
Other more standard network statistics may provide useful insights into how a SRT reshapes the interbank network. Figure \ref{fig:AvClustCoeff} shows the distribution of the average local clustering coefficient. The local clustering coefficient measures how close the neighbors of a bank are to being a clique (being all connected) and also indirectly provides information about the presence of cycles of exposures. Such cycles of exposures can create substantial levels of systemic risk because the insolvency of a bank may render other banks in the cycle insolvent (see for example, \cite{duffie2011}). The local clustering coefficient for a bank is then given by the proportion of links between the banks within its neighborhood divided by the number of links that could possibly exist between them. For the purpose of this calculation, we ignore the weights of the links (the size of the exposures). The average is then taken over all nodes in the network. Since we have $500$ time steps (and thus $500$ networks), we can compute a distribution of the average clustering coefficient. It is clear from Fig. \ref{fig:AvClustCoeff} that the SRT lowers the clustering coefficient of the interbank network, whereas the Tobin-like tax only marginally lowers it. Figures  \ref{fig:SI_emp_dist} and \ref{fig:AvClustCoeff} indicate that a SRT has the effect of cutting cycles of exposures. Such cycles of exposures can cause the bankruptcies of several banks following the default of one of them.

\begin{figure*}
  \centerline{
\includegraphics[scale=0.45]{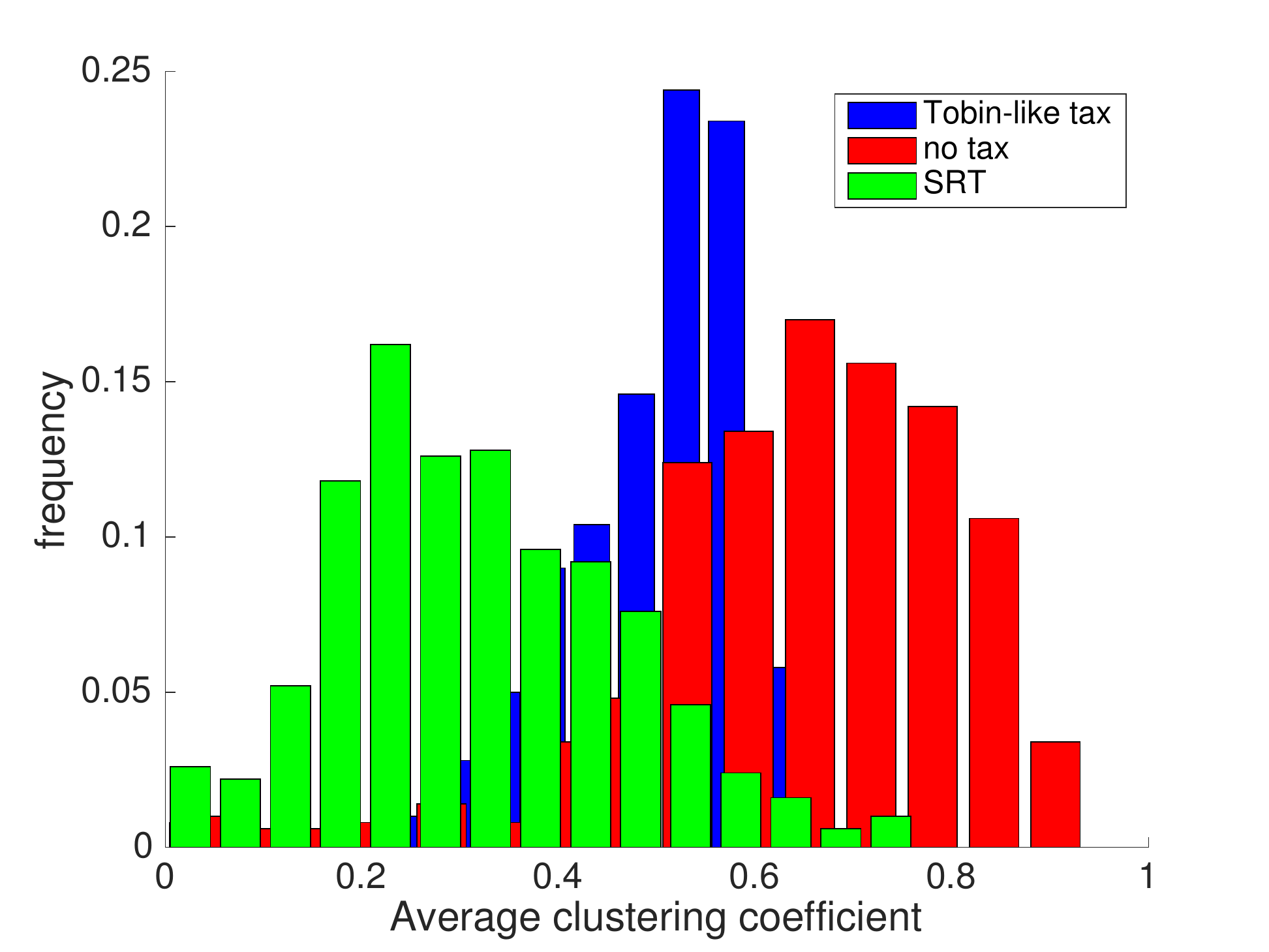}
  }
  \caption{Empirical distributions of the average clustering coefficient in the unweighted, undirected adjacency matrix of the interbank network simulated in Section \ref{sec:Reg_opt_prob}.}
  \label{fig:AvClustCoeff} 
\end{figure*}

In Fig. \ref{fig:SpectralRadius}, we look at the spectral radius. This is the magnitude of the largest eigenvalue of the unweighted, undirected adjacency matrix of the interbank network. Again we see how the SRT lowers the spectral radius whereas the Tobin-like tax has no noticeable impact on it.

\begin{figure*}
  \centerline{
\includegraphics[scale=0.45]{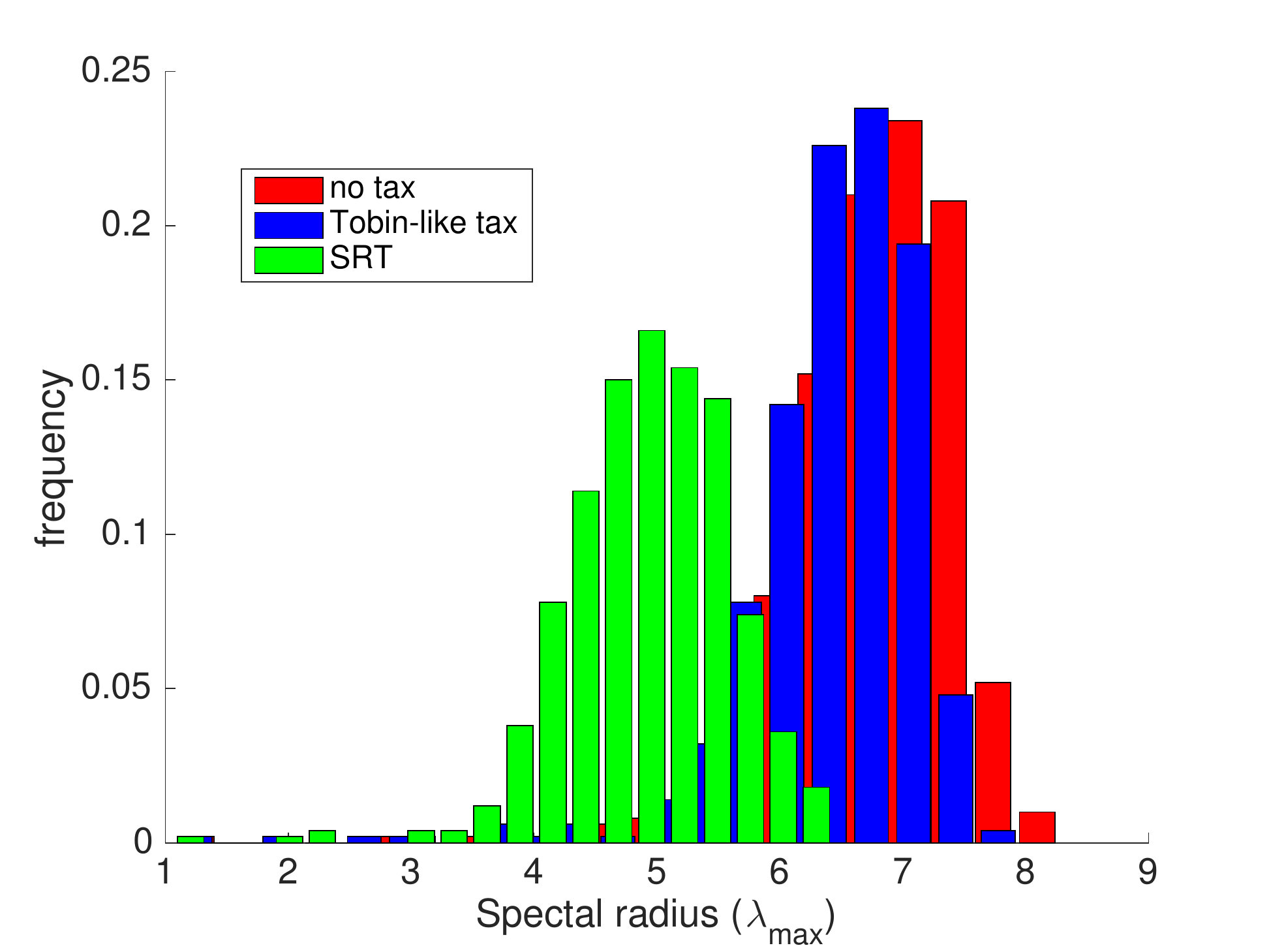}
  }
  \caption{Empirical distributions of the spectral radius in the unweighted, undirected adjacency matrix of the interbank network simulated in Section \ref{sec:Reg_opt_prob}.}
  \label{fig:SpectralRadius} 
\end{figure*}

\newpage

\subsection{Proofs}

\begin{proof}[Proof of Lemma \ref{lem:RiskPremia}]

A lender $i$'s expected payoff is
\begin{equation}
\Pi_{\lambda}^i(j) = \frac{1}{(1+r_i)^S} (1-\rho_{t,S}^j) (1 + r_i + h_{ij})^S-1.
\end{equation}

If $\rho_{t,S}^k=0$ (a risk-free loan), then the risk premium is $h_{ik}=0$. Lender $i$ will hedge its risk by charging a risk premium $h_{ij}$ such that the loan has an expected value as high as that of a risk-free loan. Then for any borrower $j$ with a default probability $\rho_{t,S}^j>0$, we have that $\Pi_{\lambda}^i(j) = \Pi_{\lambda}^i(k)$. Solving this yields $h_{ij} = \frac{1+r_i}{(1-\rho_{t,S}^j)^{1/S}}-1-r_i$.

\end{proof}

\begin{proof}[Proof of Proposition \ref{prop:EqMultiplicity}]
Let $j\in \mc{B}_t$ and $i \in \mc{L}_t$ and let $\mu_t$ be a matching in which $r_{ij} < \bar{r}_j$ for any $\mu_t(i)=j$ and $r_{i j} < r_{m j} $ for any $m \in \mc{L}_t$ such that $\mu_t(m)=m$. We will verify that it fulfills all the conditions of a stable matching (cf. Definition \ref{def:stable_matching}):

Condition (I) is trivially satisfied since lenders are indifferent $P_{\lambda}^i (k) \sim P_{\lambda}^i (j)$. The absence of strict preferences means they have no incentive to change the bank to which they lend. 

Condition (II) is satisfied since all borrowers have homogenous preferences and thus it cannot be that a given subset of borrowers $\vec{b} \in \mc{B}_t$ would agree to swap the lenders to which they are matched. Indeed for any $j \in \mc{B}_t$ and $i,l \in \mc{L}_t$, if $r_i < r_l$ then $r_{ij} < r_{lj}$ (since risk premia $h_{ij}$ and $h_{lj}$ do not affect the ordering of the lending banks in a borrower's preference list, they are determined only by the $r_i$'s).

Condition (III) is satisfied since we have assumed $\mu_t$ was a matching in which $r_{ij}<\bar{r}_j$ and thus $P_{\beta}^j (j) \prec P_{\beta}^j (\mu_t(j)) = P_{\beta}^j (i) $. Moreover, since we have assumed that $r_{ij}<r_{kj}$ for any $k \in \mc{L}_t$ such that $\mu_t(k)=k$, $P_{\beta}^j (k) \prec P_{\beta}^j (\mu_t(j)) $.

Hence $\mu_t$ is a stable matching and we may denote it by $\mu^*_t$.

The upper bound on $Vol(\mu_t^*)$ can be explained as follows:  Suppose all banks in the smaller set (either $\mc{B}_t$ or $\mc{L}_t$) are matched to a counter-party in the larger set, such that $r_{ij}<\bar{r}_j$. In such a case, $Vol(\mu_t^*) = min(|\mc{B}_t|,|\mc{L}_t|)$. On the other hand, if the smaller set is $\mc{B}_t$ and some borrowers are unable to find available lenders with  $r_{ij}<\bar{r}_j$, then they will remain unmatched, in which case $Vol(\mu_t^*) < min(|\mc{B}_t|,|\mc{L}_t|)$. Hence, in general $Vol(\mu_t^*) \leq min(|\mc{B}_t|,|\mc{L}_t|)$.

\end{proof}

\begin{proof}[Proof of Lemma \ref{lem:rho^i_S}]
Let $N_t^{agg} = \sum_{j \in \mc{N}} N_t^j$ be the sum of all counting processes and let $T$ be the first jump time of $N_t^{agg}$. It follows immediately that $T \sim exp(\gamma^{agg})$, where $\gamma^{agg} = \sum_{j \in \mc{N}} \gamma^j$. Let $t_i$ be the first jump time of $N_t^{i}$. Then the probability that bank $i$ is the first to fail exogenously in the $S$ periods ahead, $\bar{\rho}_S^i$, can be expressed as

\begin{eqnarray}
\bar{\rho}_S^i &=& \mathbbm{P}\{T=t_i,T\leq S\} \\
&=&  \mathbbm{P}\{T \leq S\} \cdot  \mathbbm{P}\{T=t_i  |T \leq S \} \\
&=& \int_{0}^{S} \gamma^{agg} e^{-\gamma^{agg} \cdot t} dt \cdot \frac{\gamma^i}{\gamma^{agg}} \\
&=& \big(1 - e^{-\gamma^{agg} \cdot S} \big) \frac{\gamma^i}{\gamma^{agg}}
\end{eqnarray} 

\end{proof}

\begin{proof}[Proof of Theorem \ref{th:SRT_unique_match}]

We now introduce a definition that will be useful to prove this result.

\begin{definition}
For each $j \in \mc{B}_t$, let $\mc{L}_t^j$ be the reduced set of lenders from whom $j$ is willing to borrow, i.e. $\mc{L}_t^j = \{ i: r_{ij} < \bar{r}_j \}$. We denote by $\overline{\mc{EQ}}_t$ be the set of feasible matchings (not necessarily stable) between the set of borrowers $\mc{B}_t$ and the reduced sets of lenders $\prod_{j \in \mc{B}_t} \mc{L}_t^j$, i.e. the set of one-to-one correspondences $\mu$ such that: (I) $\forall j \in \mc{B}_t$, $\mu(j)=i$, where $i \in \mc{L}_t^j$ (if $\mu(j) \neq j$). (II) $\forall i \in \mc{L}_t$, $\mu(i)=j$, where $j \in \mc{B}_t$ (if $\mu(i) \neq i$).
\label{def:M_t}
\end{definition}

In the above definition, $\overline{\mc{EQ}}_t$ is the set of feasible matchings, i.e. those in which a borrower $j$ pays a rate $r_{ij}$ below its reservation rate $\bar{r}_j$. Note that, by definition, $\overline{\mc{EQ}}_t \supseteq \mc{EQ}_t$.
We now show that any matching $\mu \in \overline{\mc{EQ}}_t$ can be made stable under an appropriate choice of transaction-specific tax $\mc{T}$.

For any $j \in \mc{B}_t$ and $i \in \mc{L}^j_t$, let $\tau_{ij} \in \mathbbm{R_+}$ be the tax imposed on a loan extended by  lender $i$ to borrower $j$. Since $r^{\mc{T}}_{ij} = r_i + h_{ij} + \tau_{ij}$, an appropriate choice of $\tau_{ij}$'s for all $i \in \mc{L}^j_t$ can reorder borrower $j$'s preference list $P_{\beta}^j$ arbitrarily.

Now given a desired matching $\hat{\mu}_t \in \overline{\mc{EQ}}_t$, we can construct preferences $\tilde{P}_{\beta}^j$ for all borrowers $j \in \mc{B}_t$ so that the resulting stable matching will be $\mu_t^*=\hat{\mu}_t$. To see this, let $\tau_{\hat{\mu}_t(j)  j}$ be such that $r^{\mc{T}}_{\hat{\mu}_t(j)  j} < r^{\mc{T}}_{kj}, \ \forall k \in \mc{L}^j_t$ s.t. $k \neq \hat{\mu}_t(j)$ and such that $r^{\mc{T}}_{\hat{\mu}_t(j)  j} < \bar{r}_j$ (if $j$ is to be matched) and $r^{\mc{T}}_{k,j} > \bar{r}_j \ \forall k \in \mc{L}^j_t$ if $\hat{\mu}_t(j)=j$ (if $j$ is to remain unmatched). Then, $\tilde{P}^j_{\beta} = \{ \hat{\mu}_t(j), \pi (\mc{L}^j_t \setminus \hat{\mu}_t(j)) \}$, where $\pi (\mc{L}^j_t \setminus \hat{\mu}_t(j))$ is some permutation of $j$'s reduced set of lenders $\mc{L}^j_t$ excluding the desired counter-party $\hat{\mu}_t(j)$. This results in preferences $ \{\tilde{P}^j_{\beta}\}_{j\in \mc{B}_t}$, such that the desired counter-party $\hat{\mu}_t(j)$ is on top of each borrower $j$'s preference list.

We now show that, under those tax-induced preferences $ \{\tilde{P}^j_{\beta}\}_{j\in \mc{B}_t}$, the matching $\hat{\mu}_t$ is stable.

Let all borrowers solicit the lender on top of their preference lists, then all lenders will accept to extend a loan since they are indifferent as to which borrowers they trade with. We will verify that each condition of Definition \ref{def:stableMatching} is satisfied.

Condition (I) is trivially satisfied since the lenders preferences are not strict, i.e. $P_{\lambda}^i (k) \sim P_{\lambda}^i (j)$ for all $i \in \mc{L}_t$ and $k,j \in \mc{B}_t$.

Condition (II) is satisfied since no borrower can improve his expected payoff by changing his counter-party. Doing so would force him to pay a higher rate $r_{ij}^{\mc{T}}$. So it follows that no group of borrowers $\vec{b} \subset \mc{B}_t$ would agree to swap counter-parties.

Condition (III) is satisfied since for any matched borrower $j$, $r^{\mc{T}}_{\hat{\mu}_t(j)  j} < \bar{r}_j$, and $r^{\mc{T}}_{\hat{\mu}_t(j)  j} < r^{\mc{T}}_{m j} $ for any other $m \in \mc{L}_t$.

We now prove uniqueness: 

Suppose there exists another stable matching $\mu^{'*}_t \neq \mu^*_t$. Then, by the construction of $\mc{T}$, a set $\vec{b} \subset \mc{B}_t$ of borrowers are not matched to the lenders on top of their respective taxed-induced preference lists. If the tax-induced preference of some borrower $j$ is to remain alone (i.e. $r^{\mc{T}}_{ij}>\bar{r}_j$), then condition (III) is violated and $\mu^{'*}_t$ cannot be stable. Otherwise, the members of $\vec{b}$ can agree to swap counter-parties so that they are matched with their top choices and thus condition (II) is violated. Thus $\mu^{'*}_t$ cannot be stable and we conclude that there exists a unique stable matching $\mu^{*}_t$, in which each borrower is matched with its (tax-induced) preferred lender or with itself (i.e. remains unmatched).

$\overline{\mc{EQ}}_t$ is thus the set of matchings that can be sustained as a unique equilibrium under an appropriate choice of the tax.

\end{proof}

\begin{proof}[Proof of Proposition \ref{prop:Tobin_tax}]

Let $\kappa > 0$ be a Tobin-like tax, i.e.

\begin{equation}
r^{\kappa}_{ij} = r_i + h_{ij} + \kappa
\end{equation}
for some lender $i\in\mc{L}_t$ and some borrower $j \in \mc{B}_t$. 

Part (i):  

This proof is identical to that or Proposition \ref{prop:EqMultiplicity}(i), but with $\mu_t$ being a matching in which $r^{\kappa}_{ij} < \bar{r}_j$ and $r^{\kappa}_{i j} < r^{\kappa}_{m j} $ for any $m \in \mc{L}_t$ such that $\mu_t(m)=m$.

Part (ii):  

Moreover, for each $j \in \mc{B}_t$, we can define a reduced set of lenders $\mc{L}_t^{j,\kappa} = \{i:r^{\kappa}_{ij} < \bar{r}_j\}$. It follows immediately that $\mc{L}_t^{j,\kappa} \subset \mc{L}_t^j$, where $\mc{L}_t^j = \{ i : r_{ij} < \bar{r}_j \}$. Preference lists over lenders, however, remain unchanged. Let $\bar{v}$ be the highest volume achievable by some matching in $\mc{EQ}_t$ and let $\mc{EQ}_t(\bar{v})$ be the set of equilibria achieving this volume. Then  $\mc{EQ}_t^{\kappa}(\bar{v}) \subseteq \emptyset \bigcup\mc{EQ}_t(\bar{v})$, noting that $\mc{EQ}_t^{\kappa}(\bar{v})$ may be empty. It thus follows that $\underset{\mu_t^{*,\kappa} \in \mc{EQ}_t^{\kappa}}{\text{max}} Vol(\mu_t^{*,\kappa}) \leq  \underset{\mu_t^{*} \in \mc{EQ}_t}{\text{max}} Vol(\mu_t^{*})$.

\end{proof}

\begin{proof}[Proof of Proposition \ref{th:ESL}]

\item Part (i):

Consider the set $\{ \mu_t \in \overline{\mc{EQ}}_t : Vol(\mu_t) \geq Vol(\mu^*_t) \}$, where $\overline{\mc{EQ}}_t$ is the set of feasible matchings defined in the proof of Theorem \ref{th:SRT_unique_match} (Definition \ref{def:M_t}). Since $\mu^*_t \in \overline{\mc{EQ}}_t$, this set is not empty. Let us now write

\begin{equation}
\hat{\mu}_t \in \underset{\{ \mu_t \in \overline{\mc{EQ}}_t : Vol(\mu_t) \geq Vol(\mu^*_t) \}} {argmin} ESL (A_{t}(\mu_t))
\label{eq:argmin}
\end{equation}
where $A_{t}(\mu_t)$ is the net exposure matrix formed at time $t$ with the matching $\mu_t$ (not necessarily stable).

There exists at least one such minimizer $\hat{\mu}_t$, since there can only be finitely many matchings $\mu_t \in \overline{\mc{EQ}}_t$ and $\{ \mu_t \in \overline{\mc{EQ}}_t : Vol(\mu_t) \geq Vol(\mu^*_t) \} \subseteq \overline{\mc{EQ}}_t$. 

Now from Theorem \ref{th:SRT_unique_match}, there exists $\mc{T}$ such that $\mu^{*,\mc{T}}_t = \hat{\mu}_t$. It follows that there exists $\mc{T}$ such that $ESL (A^{*,\mc{T}}_{t}) \leq ESL (A^*_{t})$ and $Vol(\mu^{*,\mc{T}}_t) \geq Vol(\mu^*_t)$.

\item Part (ii):

 Let $\kappa > 0$ be a Tobin-like tax, i.e.

\begin{equation}
r^{\kappa}_{ij} = r_i + h_{ij} + \kappa
\end{equation}
for some lender $i\in\mc{L}_t$ and some borrower $j \in \mc{B}_t$. As in the proof of Proposition \ref{prop:Tobin_tax} (ii), we can then define $\overline{\mc{EQ}}_t^{\kappa}$ as the set of possible matchings (not necessarily stable) under a Tobin-like tax $\kappa$ and  $\overline{\mc{EQ}}_t^{\kappa} \subseteq \overline{\mc{EQ}}_t$ since $\kappa$ is a particular case of $\mc{T}$. 

Let $\mu^{*,\kappa}_t \in \overline{\mc{EQ}}_t^{\kappa}$ be the equilibrium matching under the Tobin-like tax $\kappa$. Then $\mu^{*,\kappa}_t \in \overline{\mc{EQ}}_t^{\kappa} \subseteq \overline{\mc{EQ}}_t$. Since from Theorem \ref{th:SRT_unique_match}, we can design $\mc{T}$ such that $\hat{\mu}_t^{*,\mc{T}} = \mu_t$, for any $\mu_t \in \overline{\mc{EQ}}_t$, it then follows by an argument analogous to that of Part (i) that we can find a 
$\mc{T}$ such that $ESL (A^{*,\mc{T}}_{t}) \leq ESL (A^{*,\kappa}_{t})$ and $Vol(\mu^{*,\mc{T}}_t) \geq Vol(\mu^{*,\kappa}_t)$.

\end{proof}

\begin{proof}[Proof of Proposition \ref{prop:strict_pref_existence}]

We will first construct an equilibrium matching $\mu^*_t$ and then show that it is unique.

Each borrower $j\in \mc{B}_t$ has preferences strictly decreasing in the lenders' rates $r_i$, for $i\in \mc{L}_t$. Each lender $i\in \mc{L}_t$ has preferences strictly decreasing in the borrowers'  default probabilities $\rho^j_{t,S}$, for $j \in \mc{B}_t$. 

Let $i_1 \in \mc{L}_t$ denote the lender offering the lowest rate $r_{i_1}$. Let all borrowers (with $r_{i_1}<\bar{r}_j$) solicit this preferred lender $i_1$ for a loan. Then $i_1$ will respond by choosing the borrower $j_1\in \mc{B}_t$ with the lowest default probability $\rho^{j_1}_{t,S}$. This match of $j_1$ to $i_1$ is stable: it fulfills Condition (I) of Definition \ref{def:stableMatching}, since both $j_1$ and $i_1$ are matched to their preferred counter-party and would not want to deviate. It also satisfies Condition (III) since we assumed  $r_{i_1}<\bar{r}_{j_1}$. Condition II does not apply here since lenders' preferences are strict.

Let us now deal with the remaining unmatched members of the sets $\mc{B}_t$ and $\mc{L}_t$. Let $i_2 \in \mc{L}_t$ denote the lender offering the second lowest rate $r_{i_2}$. Let all unmatched borrowers (with $r_{i_2}<\bar{r}_j$) solicit this second preferred lender $i_2$ for a loan. Then $i_2$ will respond by choosing the borrower $j_2\in \mc{B}_t$ with the second lowest default probability $\rho^{j_2}_{t,S}$. As for the previous case, this match of $j_2$ to $i_2$ is stable, as it fulfills the conditions of Definition \ref{def:stableMatching}. Proceeding iteratively until $j_{N}$ and $i_{N}$, where $N=min(|\mc{B}_t|,|\mc{L}_t|)$ is the size of the smaller of the sets of borrowers and lenders, we have found a stable matching $\mu^*_t$ in which each bank is matched to its preferred available counter-party (or remains unmatched if there is no suitable available counter-party).

We will now show that $\mu^*_t$ is the unique stable matching. 

Let $\mu'_t$ be any matching in which $j_1$ is not matched to $i_1$. Then both $j_1$ and $i_1$ would benefit from giving up their currently-assigned counter-parties (under the matching $\mu'_t$) and trading together instead. Thus Condition (I) is violated and $\mu'_t$ is not stable.

Let $\mu''_t$ now be any matching in which $j_1$ is matched to $i_1$, but in which $j_2$ is not matched to $i_2$. Then again both $j_2$ and $i_2$ would benefit from giving up their currently-assigned counter-parties (under the matching $\mu''_t$) and trading together instead. Thus Condition (I) is violated and $\mu''_t$ is not stable. Proceeding iteratively until $j_{N}$ and $i_{N}$, where $N=min(|\mc{B}_t|,|\mc{L}_t|)$ is the size of the smaller of the sets of borrowers and lenders, we find at every step that Condition (I) is violated and thus that any matching different from $\mu^{*}_t$ is not stable. Since this covers all the possible matchings, we have shown that $\mu^*_t$ is the unique stable matching.

\end{proof}

\begin{proof}[Proof of Proposition \ref{prop:ESL_strict_prefs}]

We first prove that, as in Theorem \ref{th:SRT_unique_match}, any feasible matching can be sustained as the unique equilibrium under an appropriate choice of the SRT $\mc{T}$.

As in the proof of Theorem \ref{th:SRT_unique_match}, given any desired feasible matching $\mu_t \in \overline{\mc{EQ}}_t$, we can construct $\mc{T}$ so as to put the desired lender $i=\mu(j)$ on top of each borrower $j$'s preference list. Calling this matching $\mu^{*,\mc{T}}_t$, we now show that it is the unique stable matching. 

To show stability, we use the same intuition as in the proof of Proposition \ref{prop:strict_pref_existence}. Let $j_1\in \mc{B}_t$ be the borrower with the lowest default probability $\rho^{j_1}_{t,S}$. Then the match of borrower $j_1$ with lender $i_1=\mu^{*,\mc{T}}_t(j_1)$ is stable since both $j_1$ and $i_1$ are on top of each other's preference lists. We can now apply the same logic to show that the match of  $j_2 \in \mc{B}_t$, the borrower with the second lowest default probability $\rho^{j_2}_{t,S}$, to lender $i_2=\mu^{*,\mc{T}}_t(j_2)$ is stable. Proceeding iteratively until $j_{N}$ and $i_{N}$, where $N=min(|\mc{B}_t|,|\mc{L}_t|)$, we show that $\mu^{*,\mc{T}}_t$ is a stable matching in which each bank is matched to its preferred available counter-party under the SRT $\mc{T}$ (or remains unmatched if there is no suitable available counter-party).

Applying the exact same logic as for the uniqueness proof of Proposition \ref{prop:strict_pref_existence}, we conclude that $\mu^{*,\mc{T}}_t$ is the unique stable matching under the SRT $\mc{T}$.

We now prove Parts (i) and (ii) of the proposition.

Part (i):
The proof is identical to that of Proposition \ref{th:ESL} (i), since any feasible matching in $\overline{\mc{EQ}}_t$ can be sustained as a unique equilibrium by an appropriate choice of the SRT.

Part (ii):
The borrowing rate under a Tobin-like tax $\kappa$ here is simply defined as $r_{ij}^{\kappa} = r_i + \kappa$. The uniqueness result of Proposition \ref{prop:strict_pref_existence} holds under a Tobin-like tax $\kappa$ since it does not affect the preferences of the lenders, nor the ordering of the preferences of the borrowers. It only reduces the set of lenders with which a borrower would accept to trade (since $r_{ij}^{\kappa}$ may be greater than $\bar{r}_j$ for some lenders $i\in \mc{L}_t$).

The proof is then identical to that of Proposition \ref{th:ESL} (ii).

\end{proof}

\bibliographystyle{acmsmall}
\bibliography{SRT_Game-bibfile}

\end{document}